\newcommand{\mathd}{\mathrm{d}}
\newcommand{\tmop}[1]{\ensuremath{\operatorname{#1}}}
\newtheorem{theorem}{Theorem}
\newenvironment{proof}{\noindent\textbf{Proof\ }}{\hspace*{\fill}$\Box$\medskip}
\def\vec#1{{\bf#1}}
\def\op#1{\hat{#1}}
\def\ad{\operatorname{ad}}
\def\ket#1{| #1 \rangle}
\def\bra#1{\langle #1 |}
\def\ip#1#2{\langle #1 | #2 \rangle}
\def\norm#1{\| #1 \|}
\def\d{\partial}
\def\Tr{\operatorname{Tr}}
\def\diag{\operatorname{diag}}
\def\UU{\mathbf{U}}
\def\PSU{\mathbf{PU}}
\def\RR{\mathbb{R}}
\def\CP{\mathbf{CP}}
\def\HH{\mathfrak{H}}
\def\H{\mathcal{H}}
\def\FF{\mathfrak{F}}
\def\C{\mathfrak{C}}
\def\E{\mathfrak{E}}
\def\JJ{\mathfrak{J}}
\def\f{\vec{f}}
\def\sx{\op{\sigma}_x}
\def\sy{\op{\sigma}_y}
\def\sz{\op{\sigma}_z}
\def\ONE{\mathbb{I}}
\begin{document}
\title{Efficient Algorithms for Optimal Control of Quantum Dynamics:
The ``Krotov'' Method unencumbered}

\author{S.~G.~Schirmer}\email{sgs29@cam.ac.uk}
\author{Pierre de Fouquieres}\email{plbd2@cam.ac.uk}
\affiliation{Department of Applied Maths and Theoretical Physics,
             University of Cambridge, Wilberforce Road, Cambridge, 
	     CB3 0WA, United Kingdom}
\date{\today}
\begin{abstract}
Efficient algorithms for the discovery of optimal control designs for
coherent control of quantum processes are of fundamental importance.
One important class of algorithms are sequential update algorithms
generally attributed to Krotov.  Although widely and often successfully
used, the associated theory is often involved and leaves many crucial
questions unanswered, from the monotonicity and convergence of the
algorithm to discretization effects, leading to the introduction of
ad-hoc penalty terms and suboptimal update schemes detrimental to the
performance of the algorithm.  We present a general framework for
sequential update algorithms including specific prescriptions for
efficient update rules with inexpensive dynamic search length control,
taking into account discretization effects and eliminating the need for
ad-hoc penalty terms.  The latter, while necessary to regularize the
problem in the limit of infinite time resolution, i.e., the continuum
limit, are shown to be undesirable and unnecessary in the practically
relevant case of finite time resolution.  Numerical examples show that
the ideas underlying many of these results extend even beyond what can
be rigorously proved.
\end{abstract}
\pacs{02.60.Pn,
      32.80.Qk,
      03.67.Lx,
      03.65.Ta 
}
\keywords{quantum processes, coherent control, optimization} 
\maketitle

\section{Introduction}

Quantum mechanics has evolved from a fundamental scientific theory to
the point where engineering quantum processes is becoming a realistic
possibility with many promising applications ranging from control of
multi-photon excitations \cite{may_theory_2007,ohtsuki_optimal_2007} and
vibrational and rotational states of molecules
\cite{dosli_ultrafast_1999,
wang_femtosecond_2006,ndong_vibrational_2010} to entanglement generation
in spin chains~\cite{PRA82n012330,PRA81n032312} and gate or process
engineering problems in quantum information
\cite{tesch_quantum_2002,treutlein_microwave_2006,NJP11n105032,JMO56p831};
from ultracold gasses \cite{koch_stabilization_2004} and Bose-Einstein
condensates
\cite{sklarz_loading_2002,hohenester_optimal_2007,grond_optimal_2009},
to cold atoms in optical lattices
\cite{greiner_quantum_2002,bloch_many-body_2008}, trapped ions
\cite{leibfried_quantum_2003,garcia-ripoll_speed_2003,garcia-ripoll_coherent_2005,
dorner_quantum_2005,blatt_entangled_2008,johanning_quantum_2009,
wunderlich_quantum_2010,timoney_error-resistant_2008,nebendahl_optimal_2009},
quantum dots \cite{hohenester_quantum_2004} and rings
\cite{raesaenen_optimal_2007}, to Josephson junctions
\cite{spaerl_optimal_2007,rebentrost_optimal_2009} and superconducting
qubits \cite{hofheinz_synthesizing_2009,dicarlo_demonstration_2009};
from nuclear magnetic resonance \cite{skinner_application_2003} to
attosecond processes
\cite{ivanov_routes_1995,niikura_sub-laser-cycle_2002}.  The key to
unlocking the potential of quantum systems is coherent control of the
dynamics --- and in particular optimal control design.  The latter
involves reformulating a certain task to be accomplished in terms of a
control optimization problem for a suitable objective functional.

One approach to solving the resulting control optimization problem is
direct closed-loop laboratory optimization, which involves experimental
evaluation of the objective function (see
e.g.~\cite{judson_teaching_1992, daniel_deciphering_2003}).  This
approach has been applied to a range of problems, and in some settingts,
e.g., in quantum chemistry, where high fidelities are not critical and
the estimation of expectation values for large ensembles is fast and
inexpensive, this approach is both feasible and effective.  In other
situations, however, in particular for complex state engineering and
process optimization problems, each experimental evaluation of the
objective function may require many experiments and expensive
tomographic reconstruction, making it highly desirably to have
pre-optimized control designs based on a model of the systems, even if
imperfections in the model might necessiate a second stage of adaptive
closed-loop optimization to fine-tune the controls.  

Model-based optimal control relies on solving the resulting control
optimization problems using computer simulations of the dynamics and
numerical optimization algorithms.  Efficiency is crucial as solving the
optimization problem requires the numerical solution of the Schrodinger
equation many times over, which is generally expensive for realistic
models and practical problems, and the amount of simulation is required
is therefore a main limiting factor determining which physical systems
the technique can be applied to.  In this context two main strands of
optimization algorithms can be distinguished, namely, concurrent-in-time
and sequential-in-time.  The first can be readily understood using
general results from numerical analysis and optimization theory.  The
second, motivated by control of dynamical systems, is often formulated
in terms of iterative solution of a set of Euler-Lagrange equations.
Despite being widely used to solve control optimization problems for
quantum dynamics in the aforementioned applications, its algorithmic
performance does not have such a well established theory, and many key
issues such as its convergence behavior, the effect of discretization
errors and optimal update formulas have not been extensively studied.
This motivates us to address these issues in this article.

Although the theory can be generalized to open systems governed by both
Markovian \cite{schulte-herbrueggen_optimal_2006} and non-Markovian
dynamics~\cite{grace_optimal_2007}, our analysis in this article shall
focus on Hamiltonian control systems, i.e., systems whose evolution is
governed by a Hamiltonian $H_{\f}$ dependent on external control fields
$\f=\f(t)$, which determines the evolution of the system via the
Schrodinger equation
\begin{equation}
  \tfrac{d}{dt}U_{\f}(t) = -\tfrac{i}{\hbar} H_{\f}U_{\f}(t),
  \quad U(0) = \ONE, 
\end{equation}
where $U_{\f}(t)$ is a unitary evolution operator and $\ONE$ the
identity operator on the system's Hilbert space $\H$, which we assume to
have dimension $N<\infty$ here.  The evolution of pure-state
wavefunctions $\ket{\Psi_0} \in \H$ is given by
$\ket{\Psi_{\f}(t)}=U_{\f}(t)\ket{\Psi_0}$, and for density operators
$\rho_0$ (unit-trace positive operators on $\H$ representing mixed
states or quantum ensembles) by $\rho_{\f}(t)=U_{\f}(t)\rho_0
U_{\f}^\dag(t)$, $U_{\f}^\dag(t)$ being the adjoint of $U_{\f}(t)$.  In
this semi-classical framework the control $\f(t)$ is a classical
variable representing an external field or potential such as a laser,
microwave or radio-frequency (RF) pulse or an electric field created by
voltages applied to control electrodes, for example.  The dependence of
the Hamiltonian $H_{\f}$ on the control can be complicated but often it
suffices to consider a simple form such as a linear perturbation of the
Hamiltonian, $H_f=H_0+f(t)H_1$, for example.  Although at any fixed time
$t$ the Hamiltonian depends only on a single parameter $f(t)$, if $H_0$
and $H_1$ do not commute, i.e., $[H_0,H_1]\neq 0$, varying a single
control over time enables us to generate incredibly rich dynamics and
effectively provides us with an unlimited number of parameters, a
powerful idea, which is developed within the theory of non-linear
open-loop control.  We can exploit this freedom to manipulate the
dynamical evolution of the system to suit our needs and achieve a
desired goal.  The specific control objectives are as varied as the
applications, and many different types of control objectives have been
considered, but most fall into one of the following categories.

\textbf{State-transfer} problems involve designing a control $\f$ to
steer a system from an initial state to a target state and come in two
flavors, pure- and mixed-state transfer problems, formulated in terms of
\emph{wavefunctions} $\ket{\Psi}$ and \emph{density operators} $\rho$,
respectively.  For optimal control purposes they are usually formulated
in terms of maximizing the overlap with a desired state $\ket{\Phi}$ or
$\sigma$, or so-called \emph{transfer fidelity}
\begin{subequations}
   \begin{align}
   \FF_1(\f)    &= \Re\ip{\Phi}{\Psi_{\f}(T)},\\
   \FF_{2a}(\f) &= \Tr(\sigma\rho_{\f}(T)).
   \end{align}
\end{subequations}
Maximizing the transfer fidelity is equivalent to minimizing the distance
$d(A,B)=\norm{A-B}_S$ induced by the real Hilbert-Schmidt inner product
$\ip{A}{B}=\Re\Tr(A^\dag B)$ for operators $A,B$ on $\H$, which for 
wavefunctions can be expressed as $|\ip{\Phi}{\Psi}|^2$, in terms of the 
standard inner product, and we can equivalently express the problem in
terms of minimization of an error functional
\begin{align*}
  \E_1(\f) &= \tfrac{1}{2}\norm{\Psi_{\f}(T)-\Phi}_2^2 
           = 1-\FF_1(\f), \\
  \E_2(\f) &= \tfrac{1}{2}\norm{\rho_{\f}(T)-\sigma}_S^2 
           = E_0 - \FF_{2a}(\f),
\end{align*}
where the constant $E_0 =\Tr(\rho_0^2)+\Tr(\sigma^2)$ takes into account
the conservation of the linear entropy under unitary evolution.

\textbf{Trajectory optimization} problems have a similar flavor but
instead of minimizing the distance from a desired state at a final time
$T$, we aim to minimize the distance of the system's trajectory from a
target trajectory $\ket{\Phi_d(t)}$ or $\sigma_d(t)$ over time, leading
to an objective functional of the form
\begin{subequations}
  \begin{align} 
  \E_3(\f) &= \textstyle \tfrac{1}{2}
              \int_{0}^T\!\!\! \norm{\Psi_{\f}(t)-\Phi_d(t)}_2^2 \, dt 
            = T-\FF_3(\f),\\
  \E_4(\f) &= \textstyle \tfrac{1}{2}
              \int_0^T\!\!\norm{\rho_{\f}(t)-\sigma_d(t)}_S^2 \, dt
           = E_0-\FF_{4a}(\f),
\end{align}
\end{subequations}
with $\FF_3(\f)= \int_0^T\!\!\Re\ip{\Phi_d(t)}{\Psi_{\f}(t)} \, dt$ for
normalized wavefunctions $\ket{\Psi_{\f}(t)}$, and
$\FF_{4a}(\f)=\int_0^T\!\!\Tr(\sigma_d(t)\rho_{\f}(t))\, dt$ and
$E_0=\tfrac{1}{2}\int_0^T\Tr(\rho_0^2)+\Tr(\sigma_d^2(t))\, dt$
for density operator trajectories.

\textbf{Observable optimization} problems involve optimizing the
expectation value of an observable $Q$ (Hermitian operator on $\H$)
either at a final time $T$ or over time $[0,T]$, and also come in 
pure and mixed state versions, leading to the respective target
functionals to be maximized
\begin{subequations}
 \begin{align}
 \FF_5(\f) &= \bra{\Psi_{\f}(T)}Q\ket{\Psi_{\f}(T)}, \\
 \FF_6(\f) &= \textstyle
            \int_0^T \bra{\Psi_{\f}(t)}Q(t)\ket{\Psi_{\f}(t)}\,dt, \\
 \FF_2(\f) &= \Tr(Q\rho_{\f}(T)), \\
 \FF_4(\f) &= \textstyle
            \int_0^T \Tr(Q(t)\rho_{\f}(t))\, dt.
\end{align}
\end{subequations}
The last two of these are just simple generalizations of $\FF_{2a}$ and
$\FF_{4a}$ since density matrices are Hermitian operators on $\H$.
Observable and trajectory optimization problems involving linear
combinations of various objectives can obviously also be considered.

\textbf{Unitary gate optimization} problems involve minimizing the
distance from a target gate $V\in \UU(N)$ or equivalently maximizing
the gate fidelity
\begin{equation} 
 \label{eq:gate1}
  \FF_7(\f) = \Re\Tr(V^\dag U_{\f}(T))
            = N-\tfrac{1}{2}\norm{U_{\f}(T)-V}_S^2.
\end{equation}
Unitary gate optimization problems and pure-state transfer problems can
be formulated using absolute values instead of the real part 
\begin{subequations}
\begin{align}
  \FF_{1b}(\f) &= |\ip{\Phi}{\Psi_{\f}(T)}|^2, \\
  \FF_{7b}(\f) &= |\Tr(V^\dag U_{\f}(T))|, \label{eq:gate2}
\end{align}
\end{subequations}
which corresponds to optimizing the overlap with the target state or
gate modulo a global phase factor, which is usually irrelevant.
Mathematically, this corresponds to restricting the state space to
$\CP^{N-1}$ instead of unit vectors in $\H$ or the projective unitary
group $\PSU(N)$ instead of $\UU(N)$.

Historically, the first objective considered in this context was $\FF_5$
\cite{somloi_controlled_1993}, with further developments for this case
carried out in \cite{zhu_rapid_1998} and \cite{maday_new_2003}.  In the
same series of papers, \cite{zhu_rapidly_1998} considers $\FF_{1b}$ and
\cite{ohtsuki_monotonically_1999} introduces $\FF_2$ in the context of
general dissipative evolution.  The method was applied to gate problems,
using $\FF_7$ in \cite{palao_quantum_2002} and extending this to (the
square of) $\FF_{7b}$ in \cite{palao_optimal_2003}. Later
\cite{ohtsuki_generalized_2004} considered the two quite general types
of objectives $\FF_1+\FF_3$ and $\FF_5+\FF_6$.

While the objective functionals defined above correspond to many
different control problems, a commonality between all of them is that
they are simple functionals of the evolution operator $U_{\f}(t)$, in
fact, with the exception of $\FF_{7b}$, all target functionals are
linear or bilinear functions of $U_{\f}(t)$, and therefore many
properties can be derived from analysis of the latter.

\section{Iterative Solution Schemes for Continuous-time Controls}
\label{sec:krotov}

The optimization problems defined in the previous section usually cannot
be solved directly by analytical means.  Practical schemes for finding
solutions therefore generally involve iterative algorithms.  A large
class of iterative solution schemes that have been proposed for problems
of the types above can be described in simple terms by considering a
pointer $p$ moving back-and-forth within the time interval $[0,T]$,
overwriting the value of the field at that point.  More specifically,
one usually starts with an initial trial field
$\vec{f}(t)=(f_1(t),\ldots,f_M(t))$, and then sweeps $p$ forward through
the whole time interval, while updating the value of $f_m(p)$ to
\begin{equation} 
  (1-\eta) \lim_{t\to p^+} f_m(t) + 
  \frac{\eta}{w_m(p)} \frac{\delta\FF(\f)}{\delta f_m(p)},
\end{equation}
where $\eta$ is a suitable real parameter and $w_m(p)$ a suitable real
weight function.  Here the limit in question is taken from the right and
the functional derivative of $\FF$ with respect to the field,
$\frac{\delta\FF(\f)}{\delta f_m(p)}$, is evaluated at the current field
$\vec{f}$ and in direction of the delta mass at $p$.  Then $p$ is swept
backwards through the time interval, while updating $f_m(p)$ to
\begin{equation}
  (1-\eta') \lim_{t\to p^-} f_m(t) + 
  \frac{\eta'}{w_m(p)}  \frac{\delta\FF(\f)}{\delta f_m(p)} 
\end{equation}
with the limit now taken from the left.  This forward and backwards
sweeping is repeated until some form of convergence is achieved.  Of
course, we can equally well start with the backward sweep.  

The notion of ``overwriting the field'' is made mathematically rigorous
by introducing two fields $\vec{g}$ and $\tilde{\vec{g}}$, which are
solutions to initial and final value operator differential equation
problems, respectively, and taking the actual field $\vec{f}$ to be
equal to $\vec{g}$ on the interval $[0,p)$ and $\tilde{\vec{g}}$ on
$(p,T]$.  For example, in the gate optimization case we iteratively
solve the \emph{initial value problem}
\begin{subequations}
 \begin{align}
 &i\hbar \dot{U}^{(n)}(p) = H(\vec{g}^{(n)}) U^{(n)}(p), \quad
  U^{(n)}(0)=\ONE, \\
 &\vec{g}^{(n)}(p) = (1-\eta)\tilde{\vec{g}}^{(n-1)}(p)
  + \frac{\eta}{w_m(p)} 
    \frac{\delta \FF(\f)}{\delta f_m(p)},
\end{align}
\end{subequations}
and the \emph{final value problem}
\begin{subequations}
 \begin{align}
 &i\hbar \dot{V}^{(n)}(p) = H(\tilde{\vec{g}}^{(n)}) V^{(n)}(p), \quad
  V^{(n)}(T)= V, \\
 &\tilde{\vec{g}}^{(n)}(p) = (1-\eta')\vec{g}^{(n)} 
  + \frac{\eta'}{w_m(p)} 
    \frac{\delta \FF(\f)}{\delta f_m(p)},
\end{align}
\end{subequations}
starting with some initial trial field $\tilde{\vec{g}}^{(0)}$.

There is a subtlety in the cases involving pure-state observables, which
are not immediately reducible to an ODE formulation.  These cases are
usually dealt with by reducing them to pure-state transfer or trajectory
tracking problems with a target state
$\bra{\Phi}=\bra{\Psi_0}U_{\vec{g}}(T)^\dag Q$ or trajectory
$\bra{\Phi(t)}=\bra{\Psi_0}U_{\vec{g}}(t)^\dag Q(t)$, where $\vec{g}$
here is the state of the field the last time the pointer hit $p=T$.  The
updates of $\bra{\Phi}$ or $\bra{\Phi(t)}$ when updating $\vec{g}$ at
$p=T$ can only increase the fidelity, in the pure state observable case,
due to the inequality
\begin{align*}
 \bra{\Phi_{new}} U_{\vec{f}}(T) \ket{\Psi_0} 
        -\Re(\bra{\Phi_{old}} U_{\vec{f}}(T)\ket{\Psi_0}) & \ge\\
 \bra{\Phi_{new}}U_{\vec{f}}(T)\ket{\Psi_0}
      -2\Re\left(\bra{\Phi_{old}} U_{\vec{f}}(T)\ket{\Psi_0} \right)\\
  + \bra{\Phi_{old}} U_{\vec{g}}(T)\ket{\Psi_0} & = \\
  \bra{\Psi_0} \left[U_{\vec{f}}(T)-U_{\vec{g}}(T) \right]^\dag Q 
        \left[U_{\vec{f}}(T)-U_{\vec{g}}(T)\right] \ket{\Psi_0} & \ge  0 
\end{align*}
and similarly for the tracking problem, where the current field
$\vec{f}$ is used to calculate $\bra{\Phi_{new}}$ and $\vec{g}$ is used
for $\bra{\Phi_{old}}$.

\begin{figure}
\includegraphics[width=\columnwidth]{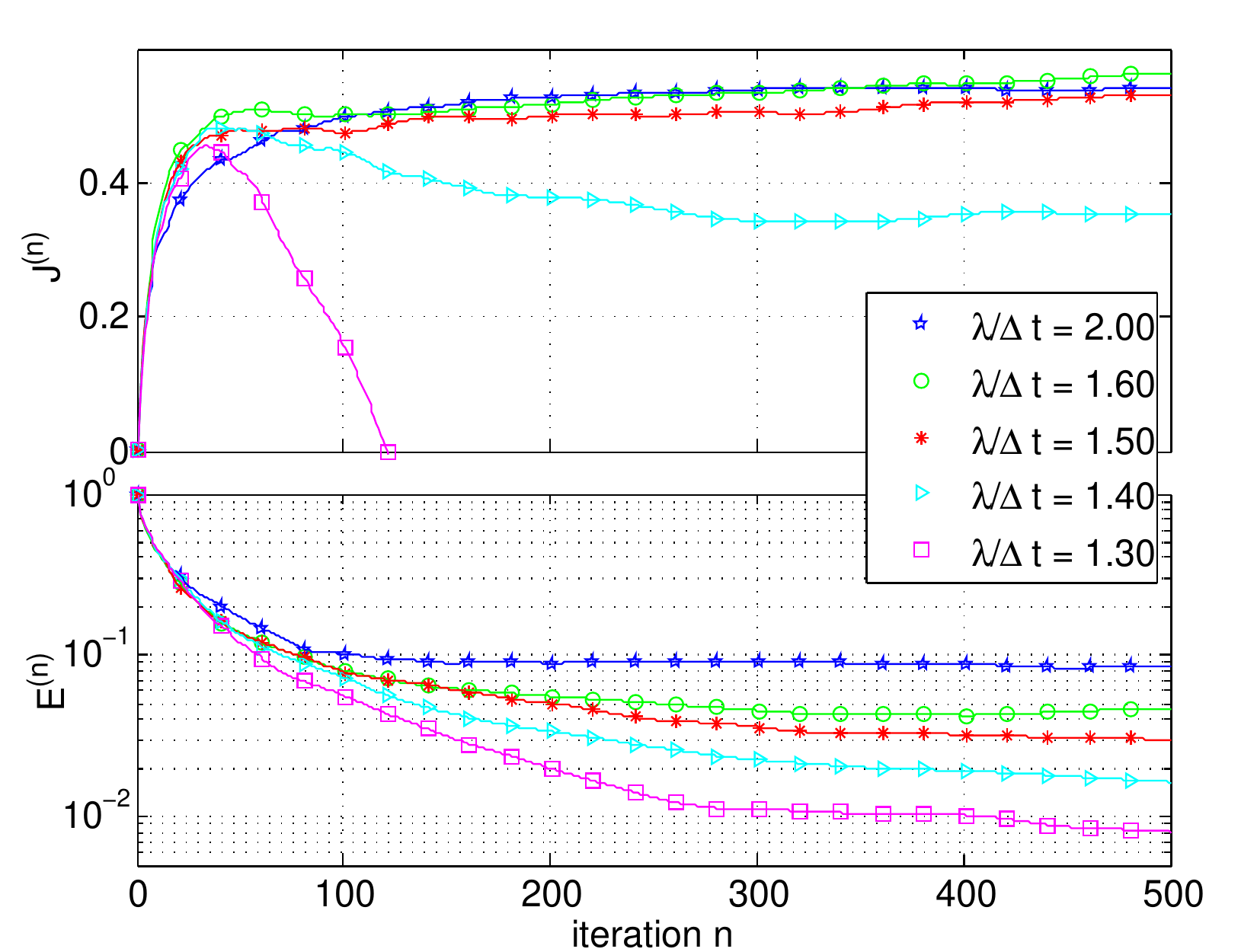}
\caption{Convergence behavior of error $\E^{(n)}=1-\FF^{(n)}$ of actual
objective, and regularized objective $\JJ^{(n)}$ for problem 1 detailed
in the appendix with finite time step $\Delta=0.01$ for different values
of $\lambda$ shows that monotonic increase of the regularized objective
is not guaranteed in the discrete case if the weight of the penalty
$\lambda$ is too small.  (Initial field $\vec{f}^{(0)}\equiv 0$,
$\eta=1$, $\eta'=0$, i.e., standard PK-Krotov-update, average cost per
forward/backward sweep/iteration $\approx 10.4$ seconds.)}
\label{fig:conv1}
\end{figure}

For the typical choices of the objective functional $\FF$ listed above,
this scheme can be shown to yield a monotonic increase of the
regularized objective function
\begin{subequations}
 \begin{align} 
  &\JJ(\vec{f}) = \FF(\f)-\C(\f), \quad
  \C(\f) = \textstyle \tfrac{1}{2} \sum_m \norm{f_m}_w^2, \\
  &\norm{f_m}_w^2 = \int_0^T\!\! |f_m(t)|^2 w_m(t)\,dt
\end{align}
\end{subequations}
between sweeps (and in fact throughout progression of the pointer $p$)
for any positive functions $w_m(t)$ and any choice of $\eta,\eta' \in
[0,2]$~\cite{maday_new_2003}.  As $\FF(\f)$ is uniformly bounded, the
sequence $\JJ^{(n)}$ is bounded and must converge to some value $\JJ_*$.
However, the best we can hope for is convergence to a critical point of
$\JJ(\vec{f})$, i.e., a point for which the variation of $\JJ(\f)$ with
regard to independent variations of the fields $f_m(t)$,
\begin{equation}
 \label{eq:crit1}
  \frac{\delta\JJ(\vec{f})}{\delta f_m(t)} 
 = \frac{\delta\FF(\vec{f})}{\delta f_m(t)} - f_m(t) w_m(t)
\end{equation}
vanishes, which happens if
$f_m(t)=\frac{1}{w_m(t)}\frac{\delta\FF(\vec{f})}{\delta f_m(t)}$.  On
the other hand, a necessary condition for the functional $\FF(\f)$ to
have an extremum is
\begin{equation}
  \label{eq:crit2}
 \frac{\delta\FF(\vec{f})}{\delta f_m(t)} = 0 \quad \forall m.
\end{equation}
This shows that this update scheme generally will \emph{not} converge to
a critical point, much less extremum, of the actual objective $\FF(\f)$,
as the \emph{only} critical point shared by the actual objective $\FF$
and the regularized objective $\JJ$ is the trivial solution $\vec{f}=0$.
In fact, convergence in the stronger sense of convergence of the field
iterates $\vec{f}^{(n)}$ to some field $\vec{f}_\ast$ that satisfies the
critical point condition (\ref{eq:crit1}) does not follow trivially, and
has only been shown under certain conditions such as sufficiently large
penalty terms~\cite{salomon_limit_2005}, for which the resulting
converged fields tend to be far from the global optimum of the
unpenalized objective function, as can be seen by comparing the
respectively critical point conditions (\ref{eq:crit1}) and
(\ref{eq:crit2}).


Furthermore, numerical solution of the initial and final value problems
generally requires some form of time discretization.  For a fixed time
step $\Delta t$ the change of the regularized objective $\JJ$ depends on
the choice of the weights $w_m$ of the penalty terms and monotonic
increase is not guaranteed.  When the weights are too small relative to
the time step $\Delta t$, the changes in the field amplitudes can become
too large and as a result $\JJ$ may decrease.  An example of this
behavior is shown in Fig.~\ref{fig:conv1} for a model system consisting
of a linear arrangement of five qubits with uniform, nearest neighbor
Heisenberg coupling subject to a fixed global magnetic field in the
$x$-direction and five local $Z$-controls, one for each qubit, leading
to a Hamiltonian of the form
\begin{equation}
  \begin{split}
   H =& J \sum_{n=1}^{4} 
        \sx^{(n)}\sx^{(n+1)}+\sy^{(n)}\sy^{(n+1)}+\sz^{(n)}\sz^{(n+1)} \\
      & +\Omega \sum_{n=1}^N \sx^{(n)} + \sum_{n=1}^5 f_{n}(t) \sz^{(n)},
  \end{split}
\end{equation}
where $\sx^{(n)}$ etc are the usual tensor products of Pauli matrices,
e.g., $\sx^{(1)}=\sx\otimes I\otimes I \otimes I \otimes I$ and $\sx$,
$\sy$ and $\sz$ the the standard Pauli matrices and $I$ the $2\times 2$
identity matrix.  For the following simulations we choose the constants
$\Omega=10$ and $J=1$.

The convergence plot shows the value of the actual objective
$\FF^{(n)}=\frac{1}{32}\Re\Tr(V^\dag U^{(n)}(T))$, the unit gate
fidelity, and the regularized objective $\JJ^{(n)}=\FF^{(n)}-\C^{(n)}$
with the penalty term $\C^{(n)}=\frac{1}{2}\sum_m\norm{f_m^{(n)}}_w^2$
with uniform weights $w_m(t)=\lambda$, as a function of the iteration
number $n$ for $\Delta t=0.01$ and different values of $\lambda$.  For
the smaller values of $\lambda$, $\JJ^{(n)}$, after increasing
monotonically for a few dozen iterations, starts to decrease as the cost
term begins to dominate.  One would usually terminate the algorithm at
this point but forcing it to continue shows that despite the decrease in
$\JJ^{(n)}$, the value of the actual objective functional $\FF^{(n)}$
continues to increase.  Monotonicity of $\JJ^{(n)}$ can be recovered by
increasing the weight of the penalty but at the expense of very low
final fidelities, which casts doubt on whether optimizing a regularized
objective is sensible at all if the real objective is to maximize the
fidelity.  This problem has also been observed in the literature and as
a remedy, a modified version of the regularized objective function
$\JJ'(\f) = \FF(\f) - \C_2(\f)$ with a cost term
\begin{equation}
   \C_2(\f) = \frac{1}{2}\sum_m \norm{f_m - a_m}_w^2
\end{equation}
has been employed~\cite{reich_monotonically_2010}.  It can be shown that
if the ``reference functions'' $a_m(t)$ are chosen to be the values of
the fields $f_m(t)$ in the previous iteration in the iterative update
scheme above, giving rise to a dynamic cost term
\begin{equation}
  \C_2^{(n)}(\f) = \frac{1}{2} \sum_m \norm{f_m^{(n)}-f_m^{(n-1)}}_w^2,
\end{equation}
then we obtain the modified update rules for the forward and backward
sweep
\begin{subequations}
 \begin{align} 
  &\lim_{t\to p^+} f_m(t) + 
  \frac{\eta}{w_m(p)} \frac{\delta\FF(\f)}{\delta f_m(p)} \\
  &\lim_{t\to p^-} f_m(t) + 
  \frac{\eta'}{w_m(p)}  \frac{\delta\FF(\f)}{\delta f_m(p)},
\end{align}
\end{subequations}
which lead to a monotonic increase in the actual objective $\FF$ for
$\eta, \eta'>0$.  If the sequence of field iterates $\vec{f}^{(n)}$
converges to some field $\f_*$ then we further have
$\norm{f_m^{(n)}-f_m^{(n-1)}}\to 0$, i.e., the cost term vanishes in the
limit, $\lim_{n\to\infty} \C_2^{(n)}=0$, thus the resulting field $\f_*$
approaches a critical point of actual objective $\FF(\f)$.  However,
such convergence is not guaranteed.

\section{Optimization of Objective without Penalty}

The previous section shows that optimizing a regularized objective with
a fixed cost term is problematic in that (i) the critical points of the
regularized objective differ from those of the actual objective and (ii)
in the \emph{discrete} time setting even monotonicity cannot be
guaranteed.  The former issue can be addressed by introducing a variable
penalty term but at the expense of compounding technical problems about
monotonicity and convergence.  We shall now show that the introduction
of a regularized objective is unnecessary, and in fact undesirable in
particular in the discrete time setting.

As observed above, a necessary condition for the fidelity $\FF(\f)$
(infidelity or error $\E(\f)$) to assume a maximum (minimum) at some
$\f=\f_*$ is that $\f_*$ be a critical point of $\FF(\f)$, i.e., that
the variation of $\FF(\f)$ with respect to $\f$ vanish for $\f=\f_*$.
As all the objective functionals defined above are simple functionals of
the evolution operator $U_{\f}(t)$, their respective critical points are
defined in terms the critical points of $U_{\f}(t)$, which can be found
by rewriting the Schrodinger equation in integral form
\begin{equation}
 \label{eq:UI}
  U_{\f+\Delta \f}(t)-U_{\f}(t)  
  = -i\int_{t_0}^t U_{\f}(t,\tau)\Delta H(\tau) U_{\f+\Delta\f}(\tau)d\tau
\end{equation}
as can be verified simply by differentiating both sides, 
with $\Delta H(\tau)=H_{\f+\Delta \f}(\tau)-H_{\f}(\tau)$ and
$U(t_2,t_1)=U(t_2)U(t_1)^\dag$. Iteratively substituting (\ref{eq:UI}) into
itself yields a perturbative series expansion similar to a Taylor
expansion for ordinary functions
\begin{multline} 
\label{solpert}
  U_{\f+\Delta\f}(t) 
  = U_{\f}(t)
    -i\int_0^t U_{\f}(t,\tau)\Delta H(\tau) U_{\f}(\tau) d\tau \\
    -\!\!\!\underset{0<\tau_1<\tau_2<t}{\int\!\!\!\int} \!\!\!\!\!\!
       U_{\f}(t,\tau_2) \Delta H(\tau_2) U_{\f}(\tau_2,\tau_1) 
       \Delta H(\tau_1) U_{\f}(\tau_1) d\tau_1 d\tau_2\\
    + \text{higher order terms}.
\end{multline}
The $(n+1)$th term in the expansion~(\ref{solpert}) 
\begin{multline} 
  \label{norterm}
  (-i)^n \underset{0<\tau_1<\cdots<\tau_n<t}{\int \cdots \int} 
   U_{\f}(t,\tau_n)\Delta H(\tau_n)U_{\f}(\tau_n) \\ 
   \cdots U_{\f}(\tau_1)^{\dag} \Delta H(\tau_1) U_{\f}(\tau_1) 
  d\tau_1 \cdots d\tau_n 
\end{multline}
can be bounded by $\tfrac{1}{n!}\norm{\Delta H}_1^n$, where 
$\norm{\Delta H}_1$ refers to $\int_0^t \norm{\Delta H(\tau)} d\tau$ for 
any chosen unitarily invariant norm.  This
shows that the maps $\f \mapsto U_{\f}(t)$ have functional derivatives
of all orders for all $t$ and are in a well-defined sense real analytic.
By the polarization method, these expansions determine all mixed
(functional) derivatives of both $U_{\f}(T)$ and $U_{\f}$, with the
$n^{th}$ order derivative in directions $\f_1, \ldots, \f_n$ bounded in
norm by $C^n\prod_{k=1}^n \norm{\f_k}_1$ in both cases.

Assuming $\f=(f_1(t),\ldots,f_M(t))$ is a vector of independent controls
$f_m(t)$ in a suitable function space such as $L^q[0,T]$ for $q>1$, for
example, defining the local variations of the Hamiltonian with respect
to the controls $H_m(p)=\tfrac{\delta H_{\f}(p)}{\delta f_m(p)}$, we see
that the variation of the evolution operator with respect to local
variations of the controls are given by
\begin{equation}
  \frac{\delta U_{\f}(t)}{\delta f_m(p)}= -iU_{\f}(t,p)H_m(p)U_{\f}(p),
\end{equation}
from which we can effectively compute the variations of all of the
objective functionals defined in the previous section, giving for instance,
\begin{subequations}
\begin{align*}
\tfrac{\delta\FF_3(\f)}{\delta f_m(p)} &= \textstyle
\int_p^T\!\! \Im\bra{\Phi_d(t)}U_{\f}(t,p) H_m(p)\ket{\Psi_{\f}(p)} \,dt\\
\tfrac{\delta\FF_4(\f)}{\delta f_m(p)} &= \textstyle
\int_p^T\!\! \Tr(Q(t) U_{\f}(t,p)[\rho_{\f}(p),i H_m(p)]U_{\f}(p,t)) \,dt\\
\tfrac{\delta\FF_6(\f)}{\delta f_m(p)} &= \textstyle
\int_p^T\!\! 2\Im\bra{\Psi_{\f}(t)}Q(t)U_{\f}(t,p) H_m(p)\ket{\Psi_{\f}(p)} \,dt.
\end{align*}
\end{subequations}

Given any $\FF$ that is a sum of multi-linear terms in $U_{\vec{f}}(T)$
and $U_{\vec{f}}$ that are bounded in all entries of $U_{\vec{f}}$, we
can easily devise schemes that lead to a monotonic increase of $\FF$.
Let $b$ be a positive function that vanishes outside the interval
$[-1,1]$ and integrates to $1$, and let $b_{s,p}(x)=b((x-p)/s)$ be the
rescaled version of $b$ centered at point $p$.  Assuming $\f(p)$ bounded
about $p$ and $s \le p,T-p$, the gradient of $U_{\f}(T)$ with respect to
$b_s$ is
\begin{multline}
\label{fingra} 
 -i\int_{p-s}^{p+s} U_{\vec{f}}(T,\tau) b_{s,p}(\tau) H_m(p)
  U_{\vec{f}}(\tau) \, d\tau \\ 
 = -i s U_{\vec{f}}(T,p) H_m(p) U_{\vec{f}}(p)+ O(s^2)
\end{multline}
since $U_{\vec{f}}(\tau)$ is locally Lipschitz at $\tau=p$, where the
error term is bounded independently of $b$.  Under the same assumptions,
the gradient of $U_{\f}(t)$ in direction $b_s$ is the function of $t$
\begin{equation}
\label{trajgra}
  \left\{ \begin{array}{r@{\qquad}l}
    -i s U_{\vec{f}}(t,p) H_m(p) U_{\vec{f}}(p) & t>p\\
                                              0 & \text{otherwise}
  \end{array} \right. 
\end{equation}
up to an $O(s^2)$ error, as measured in any $L^q$ norm with $q<\infty$.
Applying the product rule to our general $\FF$ with the approximations
(\ref{fingra}) and (\ref{trajgra}), dividing by $s$, then letting $s
\rightarrow 0$, shows that the value $\frac{\delta \FF(\f)}{\delta f_m(p)}$
is always well-defined and yields an expression for it. The exact sense
in which this quantity matches the derivative of $\FF$ evaluated at a
$\delta$-mass is that $b$ is a positive mollifier.  Denoting the
addition of $\alpha b_s$ to the $m$th field $f_m$ by $\vec{f}+\alpha
b_s\ONE_m$, a Taylor expansion to first order gives
\begin{equation}
  \label{eq:deltaF}
  \FF(\vec{f} + \alpha b_s \ONE_m) = \FF(\vec{f}) +
   \alpha s \frac{\delta \FF(\f)}{\delta f_m(p)}  + O(s^2) 
\end{equation}
for each $\alpha$.  This shows that for any $\alpha$ of the same sign as
$\frac{\delta\FF(\f)}{\delta f_m(p)}$, assuming the former is non-zero,
there is a sufficiently small scale $s$ for which adding $\alpha b_s$ to
the field leads to an increase in $\FF$.  Thus, if we add a sequence of
such increments $\alpha b_{s,p_k}$ centered at different times $p_k$,
the corresponding sequence $\FF^{(k)}$ will be monotonically increasing.
For a fidelity function $\FF$, which is a continuous function from the
compact domain $\UU(N)$ to $\RR$, the range of $\FF$ is bounded.  Thus
$\FF^{(k)}$ is a uniformly bounded, monotonically increasing sequence,
i.e., convergent.  Convergence of $\FF^{(k)}$ to some value $\FF_*$ is
not a very strong property, however; what is more interesting is whether
we converge to a global maximum of the fidelity, or equivalently,
whether the infidelity/error $\E$ goes to $0$, and the rate of
convergence.

Using this update rule, making bigger changes to each $f_m$ in its
gradient direction should in principle offer larger increases in the
fidelity $\FF(\f)$ for the same displacement incurred by the pointer
$p$.  On the other hand, we can expect larger choices of $\alpha$ to
directly lead to higher amplitudes of $\f$, which in turn induce more
oscillation in $U_\f$, the gradients $\frac{\delta \FF(\f)}{\delta
f_m(p)}$ and therefore in the fields.  So it would seem that in choosing
$\alpha$, we are making a trade-off between high amplitude and
oscillation in the fields, which are generally undesirable features, and
fast convergence of the algorithm.  To make this intuition about speed
rigorous, we can prove that there is an upper bound on the
instantaneous rate of convergence which increases with the search length
$\alpha$ and tends to zero as $\alpha$ vanishes.  However, this is a
static analysis that cannot take into account the evolution of the
algorithm.  Once we move to a more sophisticated `stateful' analysis, we
can derive a different bound on the asymptotic rate of convergence which
is actually \emph{decreasing} in the search length $\alpha$.  The
resulting bounds on the asymptotic convergence rates in the continuous
limit are shown in Fig.~\ref{fig:bounds1}.  The non-trivial stateful
bound in particular shows that larger search lengths, which are bound to
result in greater speeds towards the start of the algorithm, will tend
to slow down optimization in the long run, at least past a certain value
of $\alpha$.

\begin{figure}
\includegraphics[width=\columnwidth]{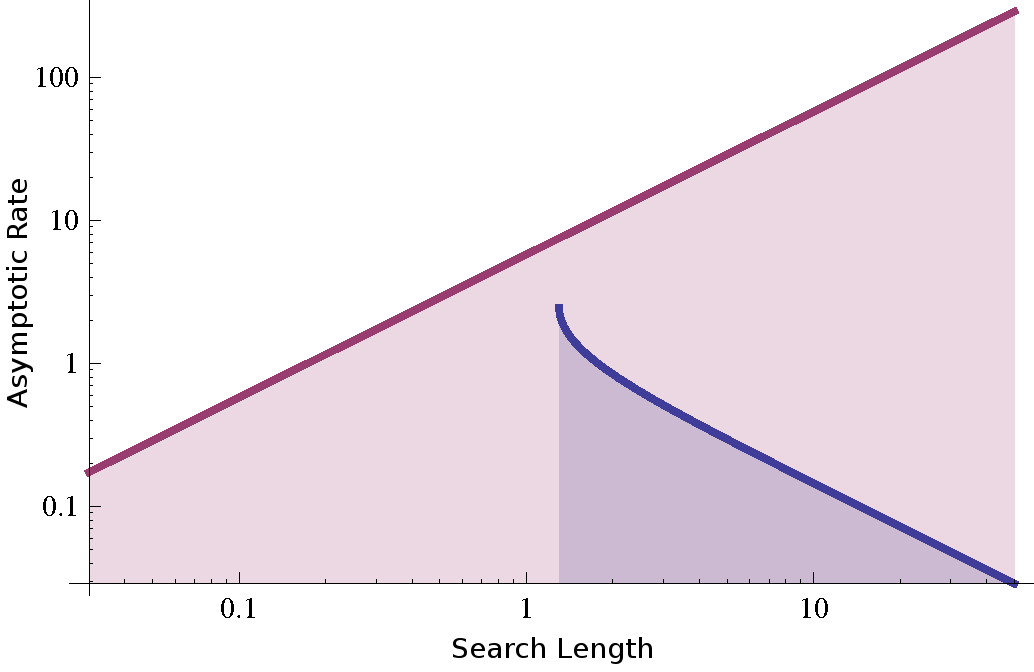}
\caption{Static (red, increasing) and stateful (blue, decreasing) upper
bounds on the asymptotic rate of convergence for back-and-forth sweeping
in the continuous limit $s\to0$ for a single control field show that
there is an asymptotic rate that cannot be exceeded no matter what search
length $\alpha$ is used.}  \label{fig:bounds1}
\end{figure}

\section{Field discretization and practical optimization schemes}

The previous section shows that we can in principle easily monotonically
increase the objective function by iteratively updating the controls in
a sequence of forward and backward sweeps.  In practice solving the
optimal control problem numerically requires discretization of time.
This is often done implicitly but we shall see that the choice of
discretization restricts the class of monotonic schemes that are
available compared to the continuous case.  It is therefore desirable to
explicitly discretize the problem and derive optimization schemes for
this case.  We can do this by choosing a basis function $b$, with a step
size $s$ and a set of positions $p_k$.  The most common choice is to
partition of the interval $[0,T]$ into a fixed number of intervals of
size $\Delta t=T/K$ and restrict the controls $f_m(t)$ to be piecewise
constant functions
\begin{equation}
  f_m(t) = \sum_{k=1}^K f_{mk} \chi_k,
\end{equation}
where $\chi_k$ is the characteristic function of the interval $I_k
=[t_{k-1},t_k)$, $t_0=0$ and $t_K=T$.  In this case we have
\begin{equation}
  \label{eq:deltaF2}
  \FF(\vec{f} + \alpha \chi_k \ONE_m) 
  = \FF(\vec{f}) + \alpha \frac{\d \FF(\f)}{\d f_{mk}} 
    + O(\alpha^2 \Delta t^2),
\end{equation}
where $\ONE_m$ indicates that we are adding the basis function to the
field $f_m(t)$.  For a given fixed $\Delta t$ the $O(\alpha^2\Delta t^2)$ term
in Eq.~(\ref{eq:deltaF2}) need not be negligible and $\alpha$ must be
chosen carefully to ensure $\FF(\vec{f}+\alpha \chi_k \ONE_m) \ge
\FF(\vec{f})$.  In theory choosing the time step $\Delta t$ and search
length $\alpha$ small will ensure $\FF(\vec{f}+\alpha\chi_k\ONE_m) \ge
\FF(\vec{f})$, i.e., a monotonic increase in the objective functional,
but this will result in tiny increases in the objective function in each
step and thus slow convergence.  For the discretized version of the
problem one can actually prove stronger convergence results in the sense
that under relatively mild conditions, the sequence of field iterates
$\f^{(n)}$ must either converge to a critical point $\f_\ast$ of the
fidelity function or diverge to infinity (see appendix \ref{app:conv}).

\subsection{Time Resolution and Gradient Accuracy}

The first critical choice we have to make is the time resolution or time
step $\Delta t$, which effectively determines the finite-dimensional
subspace of the infinite-dimensional function space $L^q[0,T]$ we choose
to restrict the optimization to.  Considering that we started with a
continuous-time optimal control problem, it may seem natural to choose
small time steps $\Delta t$ to approximate the continuous case, and it
is certainly true that choosing $\Delta t$ too large may prevent us from
being able to reach high fidelities by restricting the search space too
much.  In general, a minimum requirement for controllability is that the
dimensionality of the search space $M\times K$ be no less than the
dimension of the state space.  For optimization of five-qubit unitary
gates, for example, the state space is the unitary group $\UU(32)$,
which has dimension $32^2$.  Thus we will require at least $MK\ge 1024$
and higher time resolutions may be necessary to achieve high fidelities,
although the fidelities considered satisfactory in practice may be low
in this context.  But aside from such considerations, small time steps
are not necessarily a good idea.  Simple analysis shows that the
computational cost per iteration for a fixed problem and system size $N$
is determined by the number of time steps $K$, and thus $\Delta t=T/K$
if the target time $T$ is fixed.  Higher time resolutions, i.e., smaller
$\Delta t$, are therefore computationally more expensive.  Of course,
this cost may be offset by larger gains per iteration.  However, we
shall see that the discrete rate of convergence is always of order
$\Delta t$, and for unitary gate optimization problems with fixed
proportional search length the upper bound on the rate of convergence is
of order $\Delta t^2$, for example, suggesting that larger $\Delta t$
may actually facilitate faster convergence.  Another reason why very
small time steps are often undesirable are the characteristics of the
fields, e.g., to avoid excessively complex and noisy solutions.

Choosing larger $\Delta t$ requires careful reconsideration of the
gradient computation, however.  Assuming for simplicity that the total
Hamiltonian is linear in the controls
\begin{equation}
  \label{eq:ctrl-lin}
  H[\vec{f}] = H_0 + \sum_{m=1}^M f_m H_m,
\end{equation}
with time-independent $H_m$, we have $\delta H[\f]/\delta f_m = H_m$ and
Eq.~(\ref{solpert}) gives immediately
\begin{subequations}
 \begin{align}
  &\frac{\d U_{\f}(T)}{\d f_{mk}} 
  = U_{\f}(T,t_k) J_{mk} U_{\f}(t_k,t_0)\\
  & J_{mk} = \int_{t_{k-1}}^{t_k} \!\!\! 
           U_{\f}(t_k,t) (-iH_m) U_{\f}(t,t_k) \, dt,
\end{align}
\end{subequations}
from which we can calculate the various gradients, after setting
$\bra{\Phi_{\f}(t_k)}:=\bra{\Phi}U_{\f}(T,t_k)$ and
$Q_{\f}(t_k):=U_{\f}(T,t_k)^\dag Q U_{\f}(T,t_k)$, as
\begin{subequations}
\begin{align*}
\tfrac{\d \FF_1}{\d \f_{m k}} 
&= \Re \bra{\Phi_\f (t_k)}J_{m k} \ket{\Psi_\f(t_k)} \\
\tfrac{\d \FF_{1 b}}{\d \f_{m k}} 
&= 2 \Re (\langle\Phi_\f (t_k) |J_{m k} | \Psi_\f(t_k)\rangle 
          \langle\Psi_\f (t_k) |\Phi_\f (t_k) \rangle ) \\
\tfrac{\d \FF_2}{\d \f_{m k}} 
&= \Tr (Q_\f (t_k) [ J_{m k}, \rho_\f(t_k) ] ) \\
\tfrac{\d \FF_5}{\d \f_{m k}} 
&= 2 \Re \langle \Psi_\f (T)|Q U_\f(T,t_k) J_{m k}|\Psi_\f(t_k)\rangle \\
\tfrac{\d \FF_7}{\d \f_{m k}} 
&= \Re \Tr ( V^{\dag} U_\f ( T, t_k ) J_{m k} U_\f ( t_k ) ) \\
\tfrac{\d \FF_{7 b}}{\d \f_{m k}} 
&= \Re \left(\Tr (V^{\dag} U_\f (T,t_k) J_{m k} U_\f (t_k) )
\tfrac{\overline{\Tr ( V^{\dag}U_\f(T))}}{|\Tr ( V^{\dag} U_\f(T))|} \right).
\end{align*}
\end{subequations}
For small $\Delta t$ the integral defining $J_{mk}$ can be approximated
by replacing the integrand by its value at some point, e.g., the right
endpoint
\begin{equation}
  J_{mk} \approx (-i H_m) \Delta t.
\end{equation}
The accuracy of this first-order approximation depends on $\Delta t$ and
the eigenvalues of the total Hamiltonian $H(\f)=H_0+\sum_m f_m(t) H_m$,
i.e., the approximation is liable to break down when norm of any of the
Hamiltonians $H_m$ is large, or if the field amplitudes become too large
unless very small time steps are used.  For piecewise constant controls
$H[\f(t_k)]$ is constant on the interval $I_k$, and it is easy to derive
an exact formula for the gradient by evaluating the integrals $J_{mk}$
analytically, noting that $U_{\f}(t_{k},t)=e^{(t_k-t)B}$, and thus
\begin{equation}
 \begin{split}
  J_{mk}
 &= \int_0^{\Delta t} e^{\tau B}(-iH_m)e^{-\tau B}d\tau\\
 &= \gamma(\Delta t \ad_{B})(-iH_m) \Delta t 
\end{split}
\end{equation}
where $B=-iH(\f(t_k))$ and
\begin{equation}
  \gamma(z) = \frac{e^z-1}{z} = \sum_{n=0}^\infty \frac{z^n}{(n+1)!}.
\end{equation}
For Hamiltonian systems $\gamma(z)$ can be evaluated via the
eigendecomposition of the skew-Hermitian matrix $B=V\Lambda V^\dag$,
where $\Lambda=\diag(\lambda_r)$ and $\lambda_r$ purely imaginary.  This
allows us to evaluate $\gamma(\Delta t\ad_{B})(-iH_m)$, noting that
\begin{equation}
  J_{mk}= \sum_{r,s=1}^N \ket{v_r}\bra{v_r} J_{mk} \ket{v_s} \bra{v_s}
\end{equation}
where $\bra{v_r}J_{mk}\ket{v_s}$ are determined by $\gamma(z)$
evaluated at the eigenvalues of the adjoint $\ad_{B}$ times $\Delta t$,
which are determined by the differences of eigenvalues $\lambda_r$ of $B$,
$\omega_{rs}=\lambda_r-\lambda_s$:
\begin{equation}
  \label{exac} 
  \bra{v_r} J_{mk} \ket{v_s} 
  = \gamma(\omega_{rs} \Delta t) \bra{v_r} -iH_m \ket{v_s} \Delta t.
\end{equation}
The first order approximation $\gamma(\omega_{rs} \Delta t)=1$ is off by
close to $\min\{\Delta t|\omega_{rs}|/2,1\}$.  Thus, to ensure that the
standard approximation is reasonably accurate we require $\Delta t<
\norm{\ad_B}^{-1}$ where $\norm{A}$ is the operator norm, $\max|{\rm
eig}(A)|$.  In problem 1 it can be shown that $\norm{\ad_B} \approx
100$, with the dominant contribution being $H_0$, which shows that for
$\Delta t=0.01$ standard approximation is over 95\% accurate while for
$\Delta t=0.1$, $x$ can be as large as $10$, i.e., $\gamma(ix)$ is far
from $1$ and the error in the gradients is huge, and we are in fact
performing more of a random walk than a gradient search!  A histogram of
the actual distribution of the gradient overlaps
\begin{equation}
  \frac{\ip{\nabla\FF_{\rm exact}}{\nabla\FF_{\rm approx}}}
{\norm{\nabla\FF_{\rm exact}}\cdot\norm{\nabla\FF_{\rm approx}}}
\end{equation}
for problem 1 in Fig.~\ref{fig:grad_overlap} confirms this, showing that
the first order approximation is excellent for $\Delta t=0.01$ --- the
distribution is narrow with a median overlap of 99.77\% --- while for
$\Delta t=0.1$ the distribution is broad with a median overlap of only
$41.80$\%.

\begin{figure}
\includegraphics[width=\columnwidth]{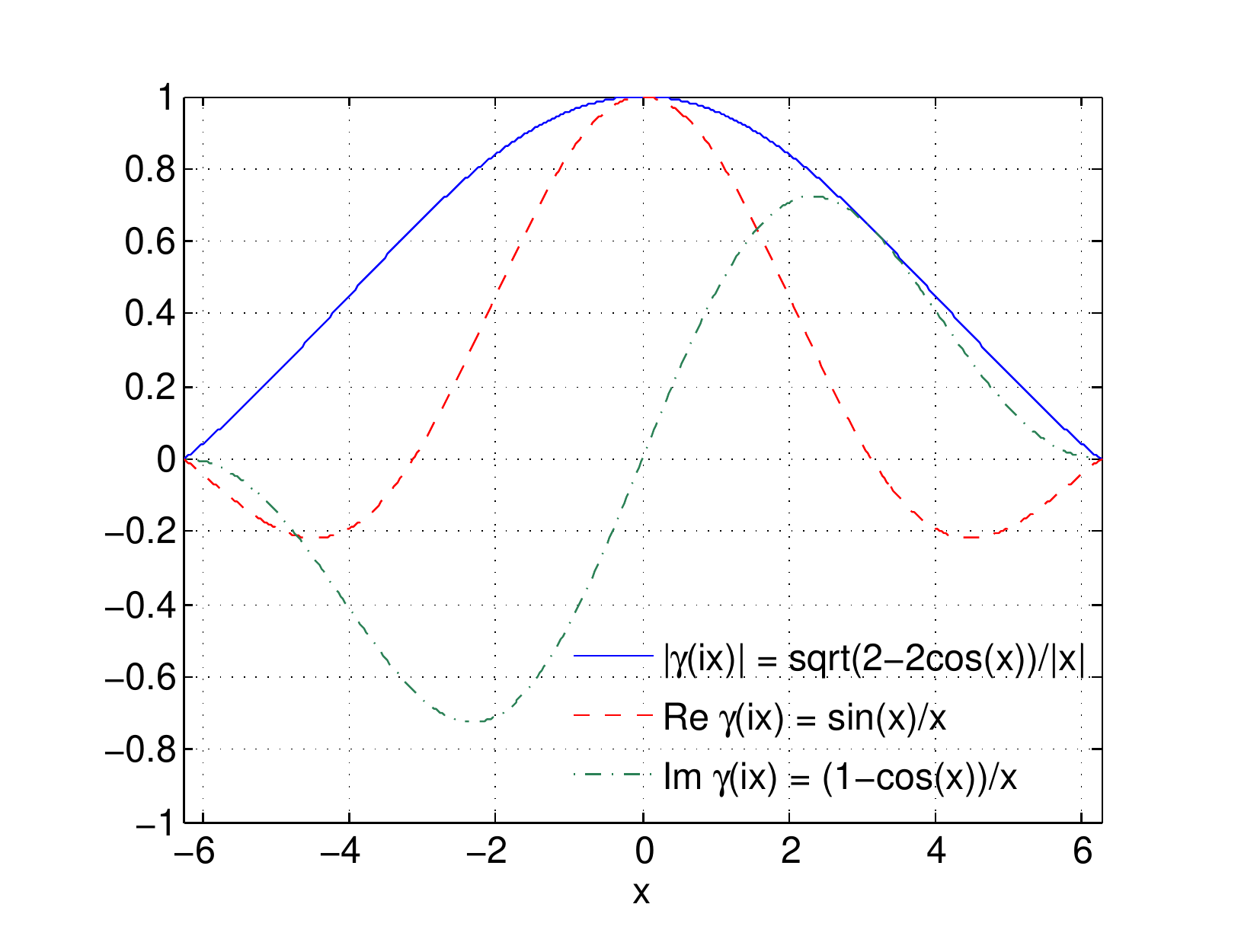}
\caption{$\gamma(ix)$ shows that $|\gamma(ix)|$ is close to $1$ for
$|x|<1$ but quickly drops off for larger $x$ leading to large errors 
in the standard gradient approximation.} \label{fig:gamma} 
\end{figure}

\begin{figure}
\includegraphics[width=\columnwidth]{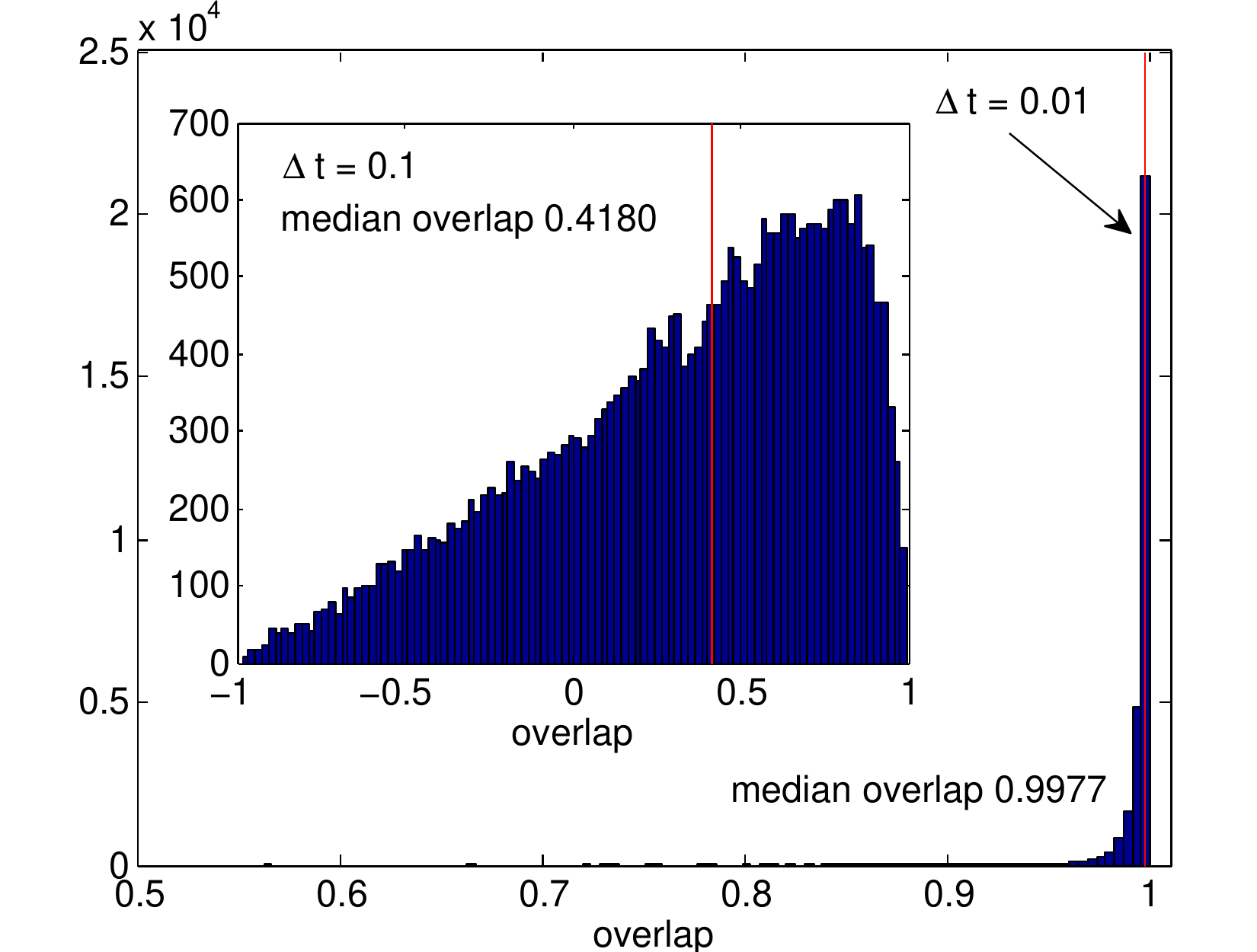} \caption{Histogram
plot of overlap of first order gradient approximation with the exact
gradient for problem 1 for $\Delta t =0.01$ and $\Delta t=0.1$ (red
lines: medians).} 
\label{fig:grad_overlap}
\end{figure}

Alternatively, it can be verified by induction that
\begin{equation} 
   (\Delta t\ad_B )^n H_m = 
   \sum_{k = 0}^n \left(\begin{array}{c} n\\ k \end{array}\right) 
   \Delta t^n B^{n-k} H_m (-B)^k 
\end{equation}
leading to
\begin{equation} 
\label{eq:series1}
    \gamma (\Delta t\ad_B) H_m 
   = \sum_{n=0}^{\infty} \frac{\Delta t^n}{n+1} 
     \sum_{k = 0}^n \frac{B^{n-k} H_m (-B)^k}{(n-k)! k!}.
\end{equation}
One possibility to evaluate this series is to truncate the infinite sum
at some $N-1 \ge 1$ and invert the order of summation to reduce the number 
of required matrix products to $3N-4$.  Any such sum of the first $N$ terms
of the series yields an approximation to the exact gradient expression
of order $N$ in $\Delta t$.  For $N$ even we can reduce the number of
required products to $2N-2$ by summing $N/2$ terms of the form
\begin{equation} 
\label{eq:series2}
     a_j \left( \sum_{k=0}^{N-1} r_j^k  \frac{\Delta t^k B^k}{k!} \right) H_m
         \left( \sum_{k=0}^{N-1}(-r_j)^k \frac{\Delta t^k B^k}{k!} \right) 
\end{equation}
for suitable choices of $a_j, r_j$.  To recover the first $N$ terms in
the series, the coefficients $a_j, r_j$ must satisfy
\begin{equation}
  \sum_{j = 1}^{N/2} a_j r_j^n = \frac{1}{n+1} = \int_0^1 x^n \, dx
\end{equation}
for $n=0,\ldots,N-1$.  This relation uniquely determines the $r_j$ as
the nodes, and $a_j$ as the corresponding weights of Gauss-Legendre
quadrature over the interval $[0,1]$.  

The series expansion is preferable to the eigendecomposition for
pure-state or state vector optimization problems; in particular for a
fidelity function such as $\FF_1$, $\FF_{1b}$ or $\FF_5$, whose gradient
in step $k$ is expressible in terms of the real inner product of the
state $\ket{\Psi_{\f}(t_k)}$ and some vector $\bra{\Phi}$, as
$\bra{\Phi}J_{mk}\ket{\Psi_{\f}(t_k)}$.  This expression can be
approximated to order $N$ by first computing
$B^k\ket{\Psi_{\f}(t_{k-1})}$ and $\bra{\Phi_{\f}(t_{k-1})}B^k$ for
$k=1,\ldots,N-1$ and then applying either of (\ref{eq:series1}) or
(\ref{eq:series2}).  These procedures use only matrix-vector or
vector-vector operations, avoiding the need for relatively expensive
matrix multiplications and an eigendecomposition.  Specifically,
Eq.~(\ref{eq:series1}) needs $M N+2N-2$ matrix-vector products and
$N((N+1)/2+M)$ vector operations to evaluate for all controls, while
Eq.~(\ref{eq:series2}) achieves the same order of approximation with
only $M N/2+2N-2$ and $(N+M)N/2$ operations respectively.  The series
expansion therefore can be applied to state transfer problems for
high-dimensional systems with sparse Hamiltonians, where the
eigendecomposition is infeasible.  However, additional measures such as
scaling and squaring may be needed for larger time steps to ensure that
the series approximation can be truncated for small $N$.  Another
alternative are finite difference approximations to estimate the
gradients.  These issues are further explored in the context of spin
dynamics for large-scale systems in~\cite{de_fouquieres_second_2011}.

We can also derive an efficient series approximation for density matrix
and unitary gate optimization problems with fidelity functions $\FF_2$,
$\FF_7$ or $\FF_{7b}$.  The gradients on step $k$ of these fidelities
can all be expressed as $\Tr(A J_{mk})$ for some skew-Hermitian matrix
$A$, which is $[\rho_\f(t_k),Q_\f(t_k)]$ for $\FF_2$, and the
skew-Hermitian part $(X-X^\dag)/2$ of $X=U_\f(t_k)V^\dag U_\f(T,t_k)$
for $\FF_7$ or the phase multiple of $X$ for $\FF_{7b}$.  Using the fact
$\Tr(W[X,Y])=\Tr([W,X]Y)$ iteratively, we can re-write $\Tr(AJ_{m k})$
as
\begin{equation}
 -\Tr \left(\sum_{n=0}^{\infty}
  \frac{\Delta t^n}{(n+1) !} \ad_{-B}^n (A)iH_m \right) \Delta t,
\end{equation}
which when truncated to its first $N$ terms, requires $N-1$ matrix
multiplications (plus $N+M$ negligible element-wise matrix operations)
to evaluate for all controls, noting that the commutator $[X,Y]$ of
skew-Hermitian matrices $X,Y$ is the skew-Hermitian part of $2XY$.
Formula (\ref{exac}), however, has the advantage of being exact, and its
numerical cost being independent of $\Delta t$.  Furthermore, once the
eigendecomposition of $H[\vec{f}(t_k)]$ is known, the computation of the
evolution operators $U_{\vec{f}}(t_{k},t_{k-1})$ becomes trivial.  This
makes it suitable --- and in many cases probably preferable --- for use
in density matrix and unitary gate optimization problems.

\subsection{Variants of Sequential Update}

\begin{figure}
\includegraphics[width=0.45\textwidth]{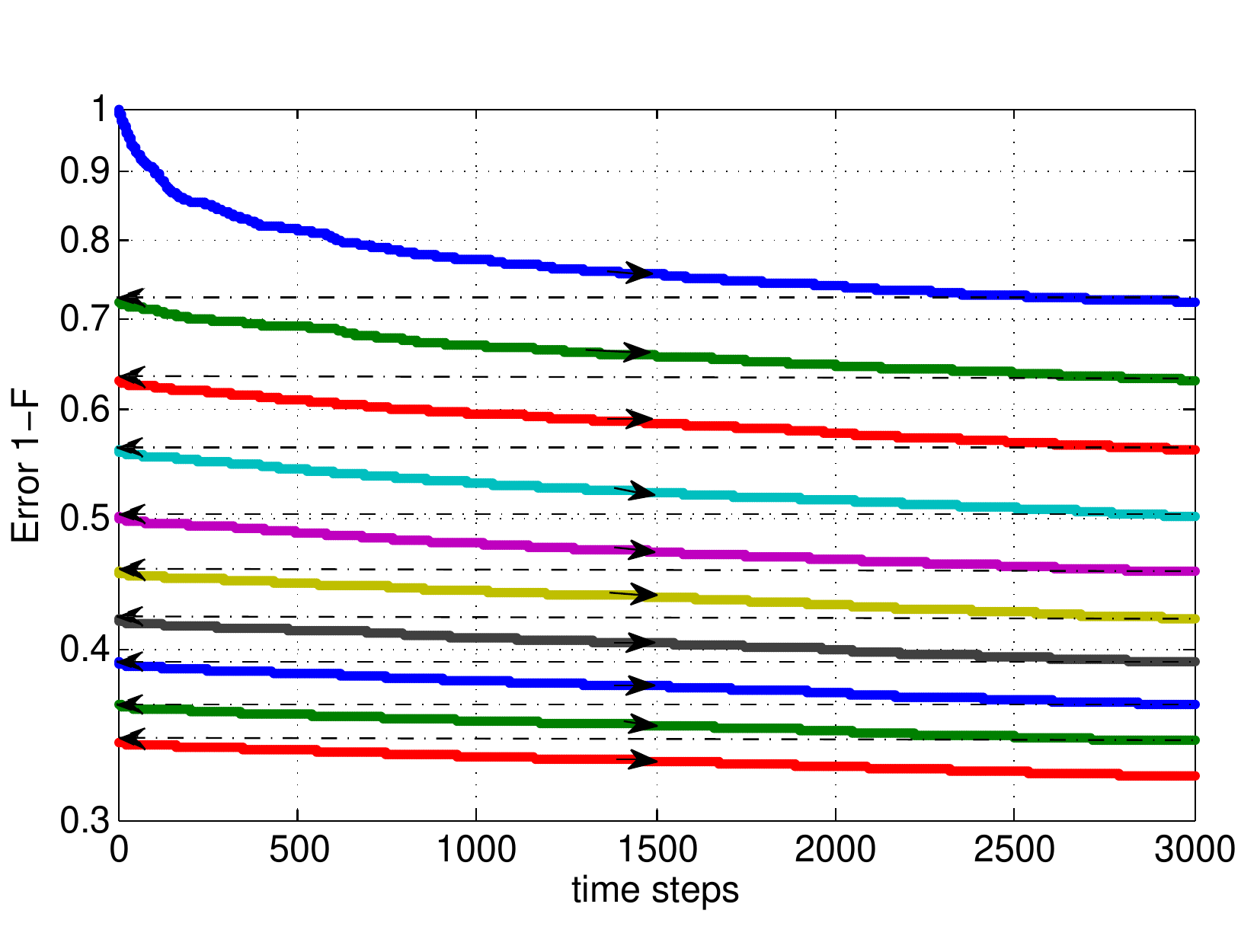}
\includegraphics[width=0.45\textwidth]{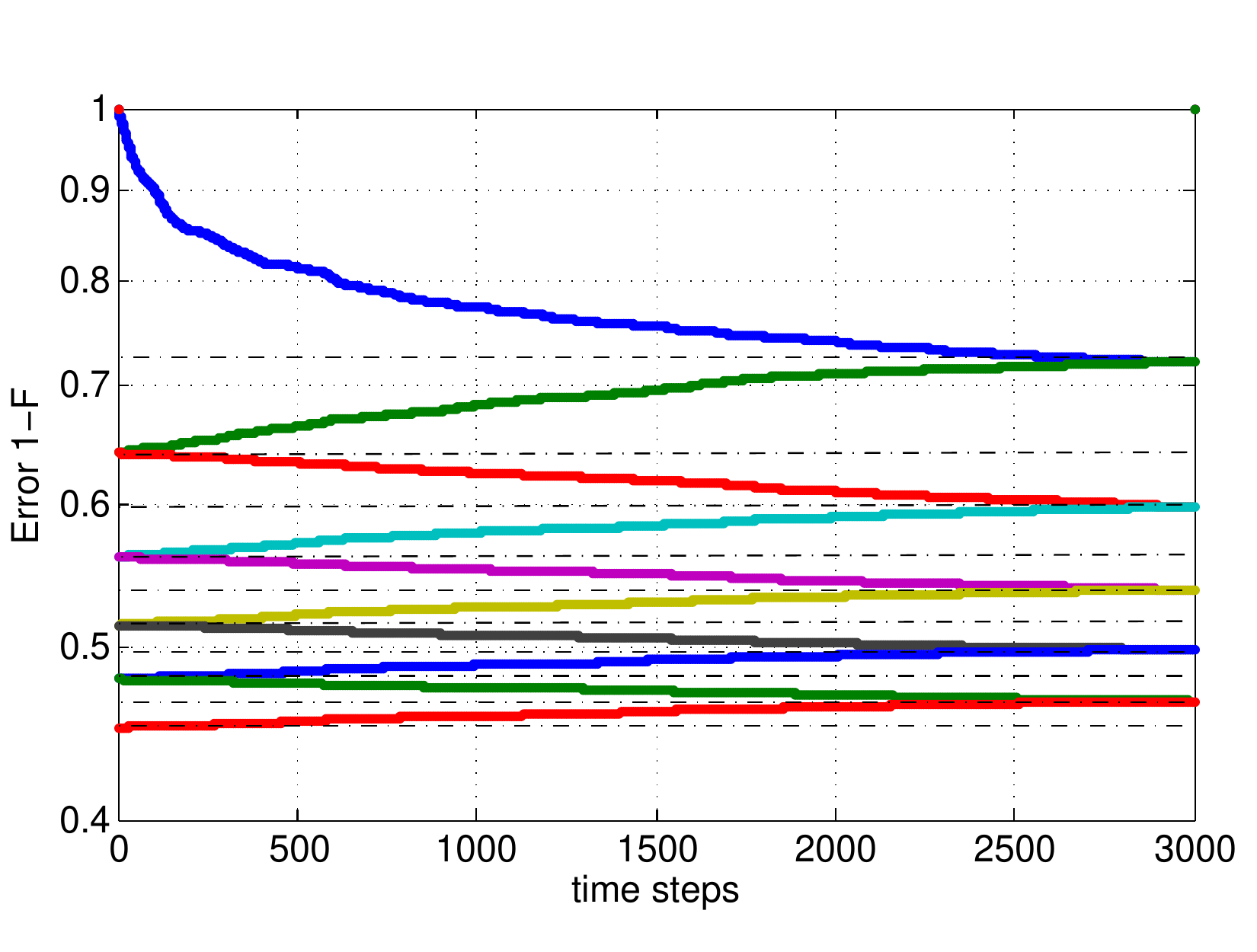}
\includegraphics[width=0.45\textwidth]{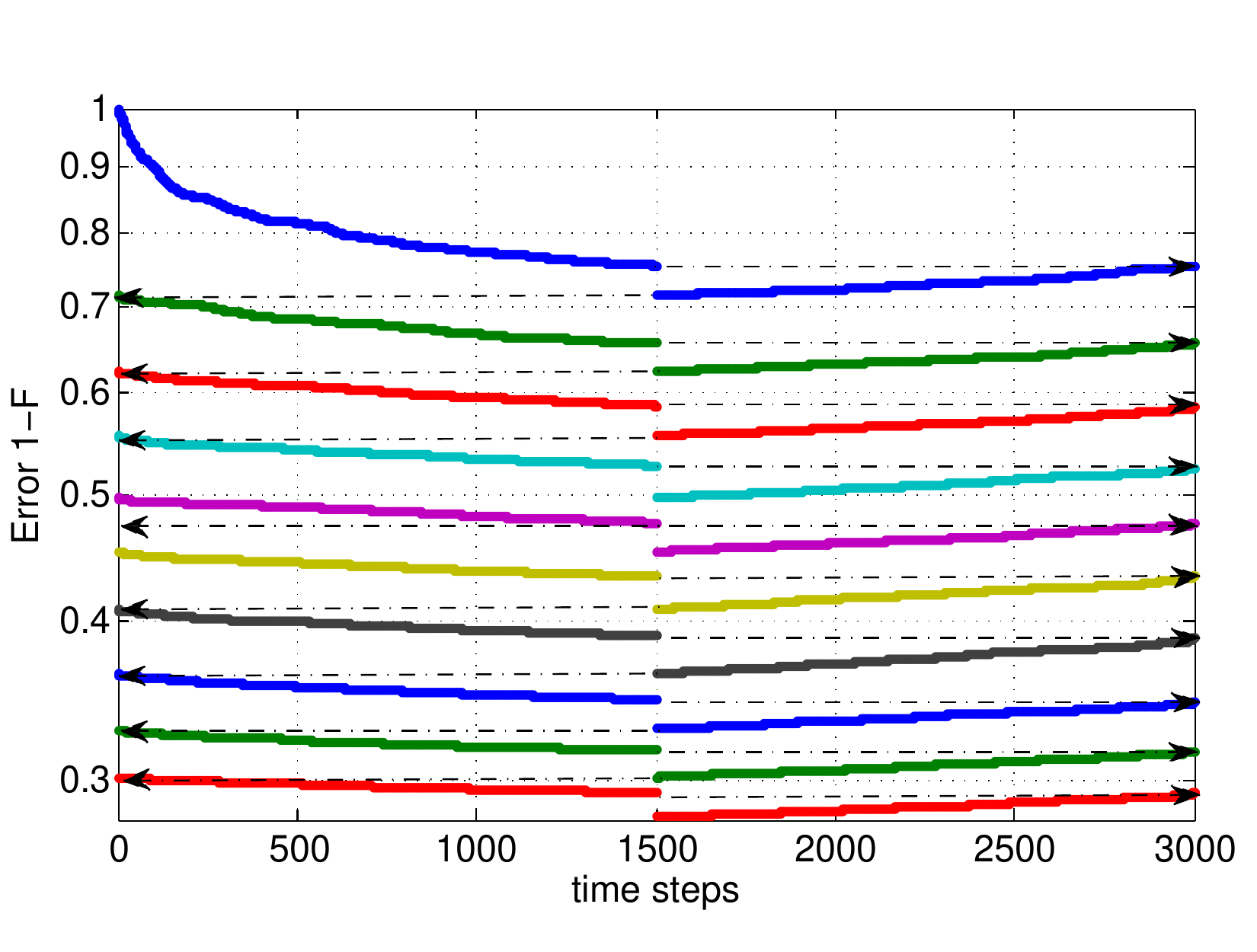}
\caption{Evolution of fidelity (first 10 iterations) for forward-only
(top) forward-backward (middle) and split (bottom) update for $\Delta
t=0.01$.  For continuous forward-backward sweeps the gradient quickly
becomes very small and the fidelity barely increases after a few
iterations.  For the forward-only and split update the switch-overs
enable revival of the gradients and thus the rate of fidelity decrease, 
thereby accelerating convergence.}  \label{fig:Fid-sweeps-3000}
\end{figure}

\begin{figure}
\includegraphics[width=\columnwidth]{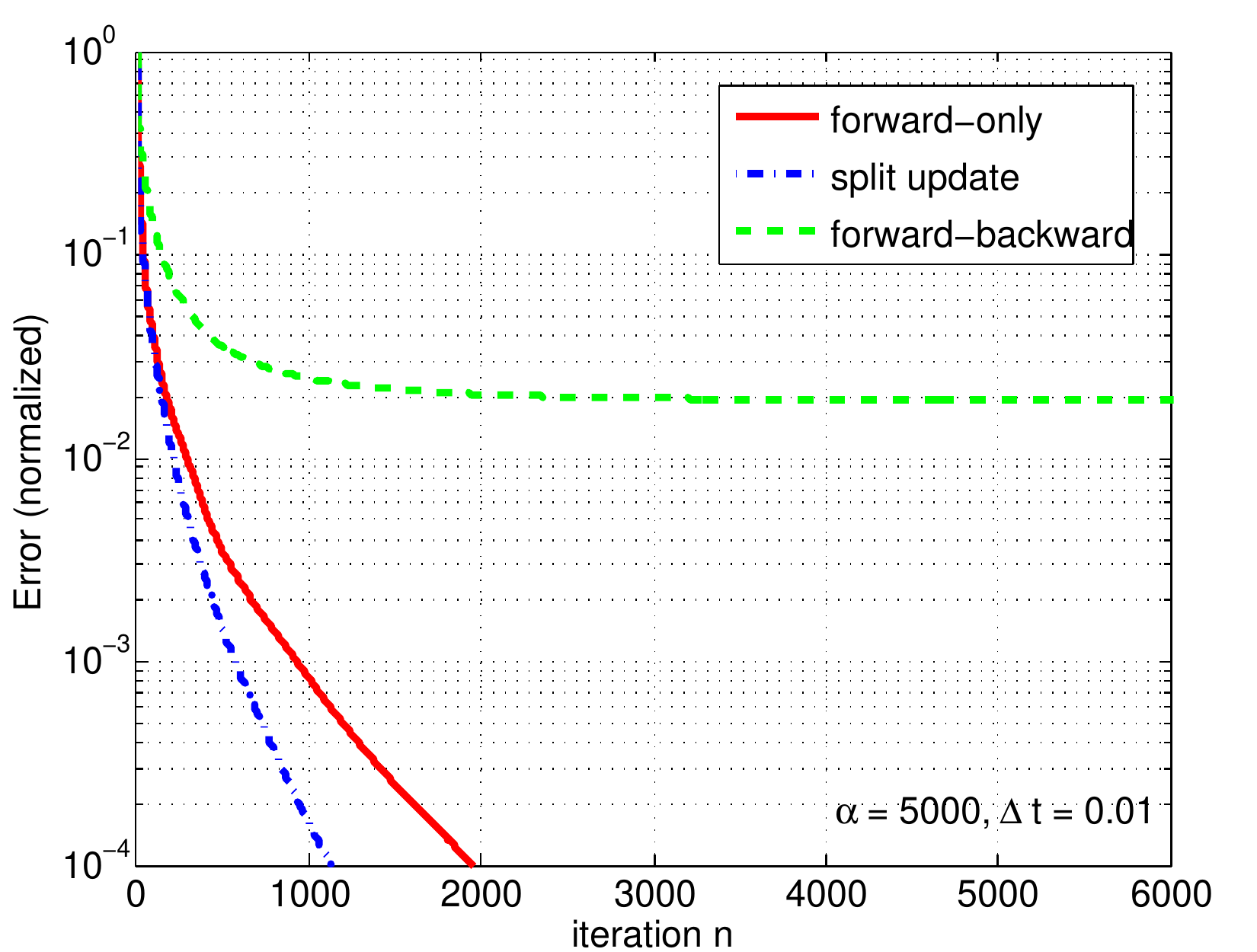}
\includegraphics[width=\columnwidth]{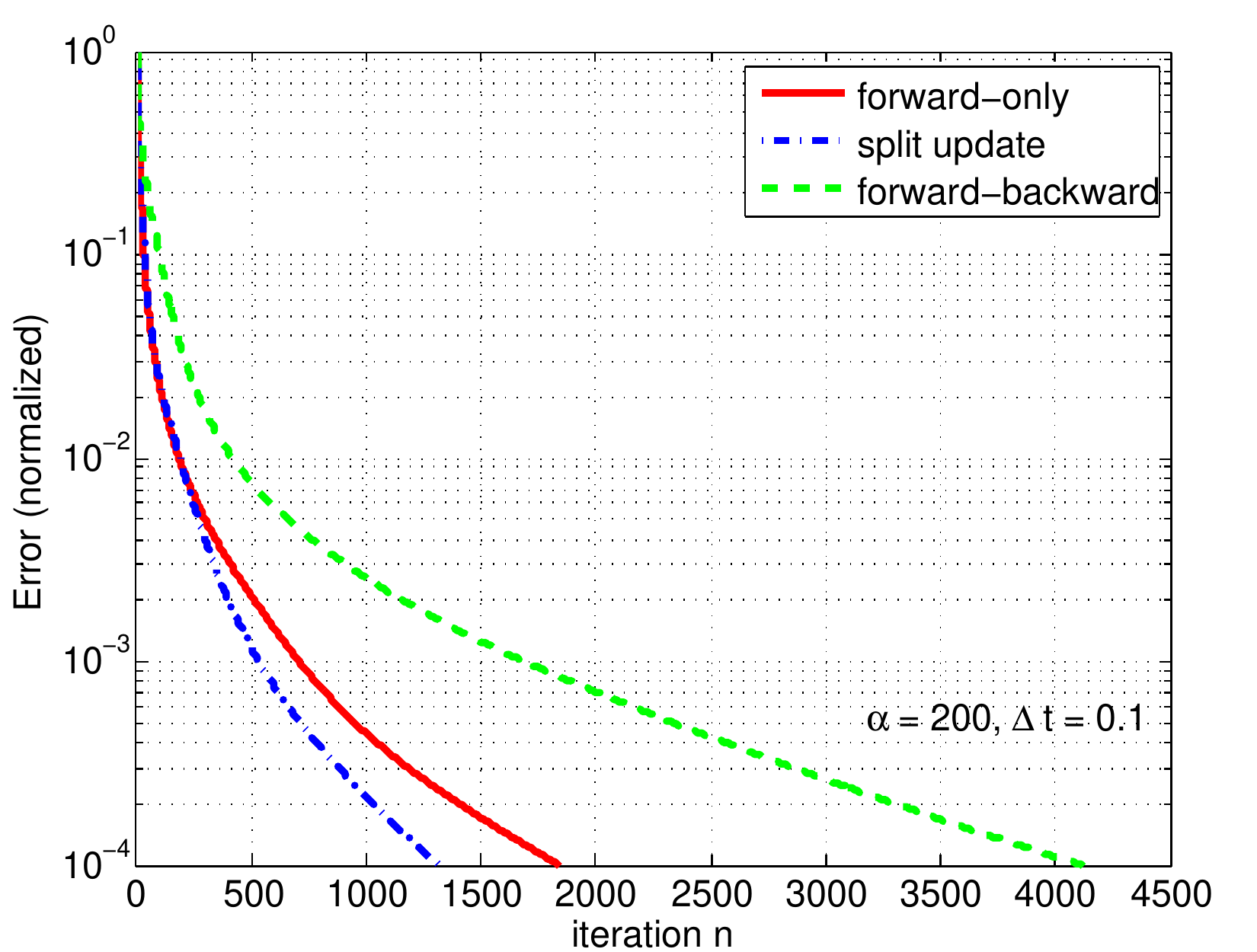}
\caption{Convergence behavior for forward-only, forward-backward and
split update for $\Delta t=0.01$ and $\alpha=5000$ and $\Delta t=0.1$
and $\alpha=200$ shows that forward-only update is preferable to
back-and-forth update and split update is even better.  The differences
decrease with larger time steps.}  \label{fig:conv-sweeps}
\end{figure}

The straightforward discrete analogue of the continuous forward and/or
backward sweeping is sequentially updating the fields $\vec{f}(t_k)$ for
$k=1$ to $k=K$ (or $k=K$ to $k=1$) according to the rule
\begin{equation}
  \vec{f}^{(n+1)}(t_k) = \vec{f}^{(n)}(t_k)+\alpha\nabla_k\FF(\f^{(n)}),
\end{equation}
where $\nabla_k\FF=(\tfrac{\d \FF}{\d f_{mk}})_{m=1}^M$ is the gradient
vector at time $t_k$, and iterating the procedure, starting with an
initial trial field $\f^{(0)}$ as before.  As in the continuous case, we
have the option of sweeping forward and backward, continually updating
the fields in both directions, as e.g., in~\cite{zhu_rapid_1998} and the
generalized scheme proposed by~\cite{maday_new_2003}, or updating the
fields only in one direction, usually the forward sweep, as in the
PK-Krotov version of the iterative update scheme discussed in
Sec.~\ref{sec:krotov}, where $\eta=1$ and $\eta'=0$.  Intuitively one
might expect that updating the fields in both directions should
accelerate convergence, but convergence analysis shows that this is not
the case.  Continually updating the fields in both directions in fact
\emph{reduces} the asymptotic rate of convergence.

This is illustrated in Fig.~\ref{fig:conv-sweeps}, which shows that the
asymptotic rate of convergence is substantially greater for the
forward-only update than for forward-backward update, in particular for
small $\Delta t$, i.e., updating on the forward sweep only is preferable
to updating on both sweeps.  This can intuitively be explained as
follows.  The sequential update scheme aims to minimize the gradient at
each time step.  By continuity, minimizing the gradient at time $t_k$
tends to decrease the gradient at the subsequent time step $t_{k+1}$.
As the magnitude of the gradient is the main factor limiting the
increments in the fidelity at each time step, we quickly reach an
asymptotic regime.  If we only update in one direction, however, there
is a step jump at every iteration, whenever we switch from $t=t_K$ to
$t=t_1$, which allows the gradient to rebound, facilitating larger gains
in subsequent steps.

These observations suggest that updating the fields in a strictly
sequential manner is not the optimal strategy, and that convergence
might be accelerated by introducing step jumps.  For example, instead of
updating the fields sequentially in a single forward or backward sweep,
we could update half the fields $\vec{f}(t_k)$ in the forward sweep and
the other half in the backward sweep.  We could choose the fields to be
updated in the forward sweep randomly or choose to update all odd time
slices $f(t_{2k-1})$ in the forward sweep, and the even ones in the
backward sweep.  The gains of such schemes per single time slice update
must be offset against increased computational overheads associated with
such non-sequential updates, chiefly additional matrix multiplications
to propagate the changes.  However, there is one simple variations of
strictly sequential update that can be implemented without increasing
the total number of operations (matrix exponentials, matrix
multiplications and gradient evaluations) per single iteration: if we
update the first $K/2$ time slices consecutively in the forward sweep,
propagate without updating to $k=K$, and update the second half of the
field consecutively in the backward sweep and then back-propagate from
$k=K/2$ to $k=1$ without updating.  Fig.~\ref{fig:conv-sweeps} shows
that this \emph{split update} strategy outperforms the forward-only
update although the gains are smaller than for back-and-forth update
versus forward-only update, and the advantage of the split update
scheme diminishes for larger time steps, not too surprisingly, as for
fewer and larger time steps the local gradients do not die off as fast,
and therefore the rebound effect induced by the switch-overs is reduced.
The cost per iteration for all update schemes was about $\approx 3.05$
seconds for $\Delta t=0.01$ ($K=3000$) using the first-order gradient
approximation, which is accurate for this time step, and $\approx 0.34$
seconds for $\Delta t=0.1$ ($K=300$) using the exact gradient
formula~\footnote{Times are approximate wall time for a standard i920
machine using a single core.}.  The total operation count (matrix
multiplications, propagation steps, gradient evaluations) per iteration
for $K=300$ is approximately $1/10$ of the number of operations for
$K=3000$.  The time per iteration for $K=300$ is slightly greater than
one tenth of the time per iteration for $K=3000$, $3.05/10=0.305<0.34$.
The additional cost reflects the fact the exact gradient evaluations are
slightly more computationally expensive than the first-order
approximation.

\subsection{Dynamic Search-length Adjustment}

\begin{figure*}
\includegraphics[width=\columnwidth]{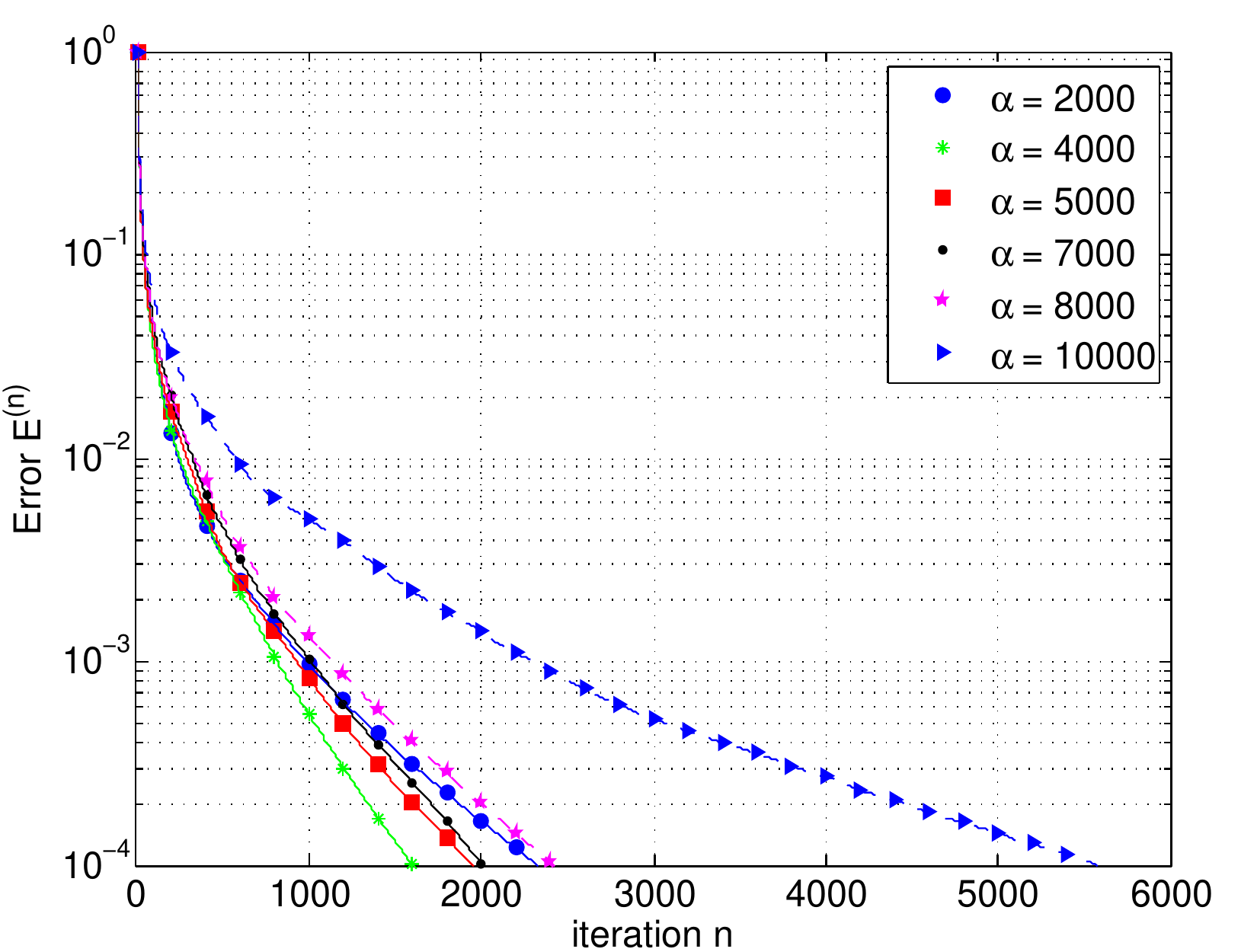}
\includegraphics[width=\columnwidth]{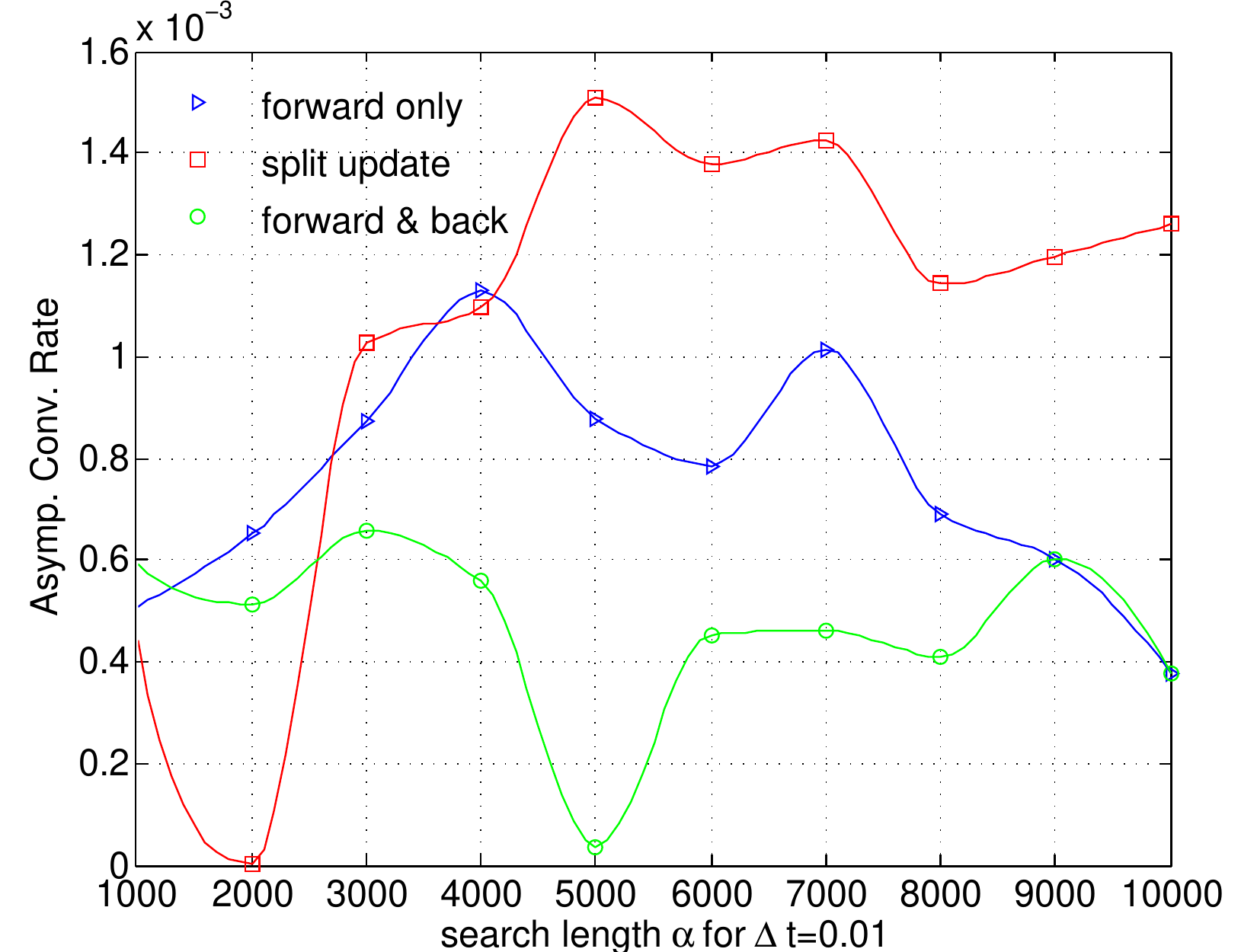}
\caption{Convergence behavior for sequential optimization without
penalty (problem 1, $\Delta t=0.01$, forward-only update) for different
choices of the search length parameter $\alpha$ shows significant
dependence of convergence rate on $\alpha$ (left).  The convergence
plots allow estimation of the ``asymptotic'' convergence rate as a
function of $\alpha$, shown right for different update strategies.}
\label{fig:conv2}
\end{figure*}

Another crucial parameter in the gradient-based sequential optimization
is the choice of the search length $\alpha$.  Fig.~\ref{fig:conv2} shows
that the rate of convergence depends strongly on $\alpha$ even if all
other parameters are the same, and we therefore require a way to choose
a suitable search length based on a simplified local model of the
function to be optimized.  As previously explained, a quadratic
approximation, say $\tilde{F}(\alpha)$, to the fidelity change
\begin{equation}
  F(\alpha) = \FF(\f + \alpha \Delta \vec{f}_k) - \FF(\f)
\end{equation}
in some direction $\Delta \vec{f}_k$ is often sufficient for the
purposes of our optimization problems. Such a second-order model
$\tilde{F}$ is determined by matching two quantities in addition to the
obvious $\tilde{F}(0)=0=F(0)$, for which it is natural to choose the
derivative $\tilde{F}'(0)=F'(0)$ and value
$\tilde{F}(\alpha_0)=F(\alpha_0)$ at some $\alpha_0$.  Assuming that
$F'(0)= \Delta \vec{f}_k^T \cdot \nabla_k\FF(\f)$ is strictly positive,
such as with $\Delta \vec{f}_k=\nabla_k\FF(\f)$ the non-zero gradient,
all possible $\tilde{F}$ are equivalent up to scale (and a sign), so
that we can just consider an appropriately invariant quantity
e.g. $\xi=1- \frac{F(\alpha_0)}{F'(0)\alpha_0}$. Since
$\tilde{F}(\alpha)= F'(0)(\alpha-\frac{\xi}{\alpha_0}\alpha^2)$, for
$\xi>0$ then $\tilde{F}$ has its second root at
$\alpha=\frac{\alpha_0}{\xi}$, so its maximum at $\alpha_\ast:=
\frac{\alpha_0}{2\xi}$, while otherwise $\tilde{F}$ is simply unbounded.
If the quadratic approximation is accurate, $(2\xi)^{-1}$ is then close
to the factor by which we must multiply the current search length
$\alpha_0$ in order to maximize the fidelity gain $F$ in the current
step. Note that this choice of new search length $\alpha_\ast$ makes
sense even in general, since $\xi$ measures the relative error in the
linear expansion $F'(0)\alpha$ of $F(\alpha)$ at $\alpha_0$ and shrinks
or grows $\alpha_0$ according to whether and how much this error is
above or below one half.  We are then aiming for the largest $\alpha$
for which the gradient based model is still relevant, so again the
largest reliable fidelity gain, although in the general case, when
$\xi^{-1}$ is large or negative, it is safer to let $\alpha_\ast$ be
some fixed multiple, say $2$, of $\alpha_0$.  Fig.~\ref{fig:quadmodel}
though, with $F(\alpha)$ and $\tilde{F}(\alpha)$ computed for a randomly
chosen time step for problem 1, shows that the quadratic model to be an
excellent approximation here, as the theory would lead us to expect.

\begin{figure}
\includegraphics[width=\columnwidth]{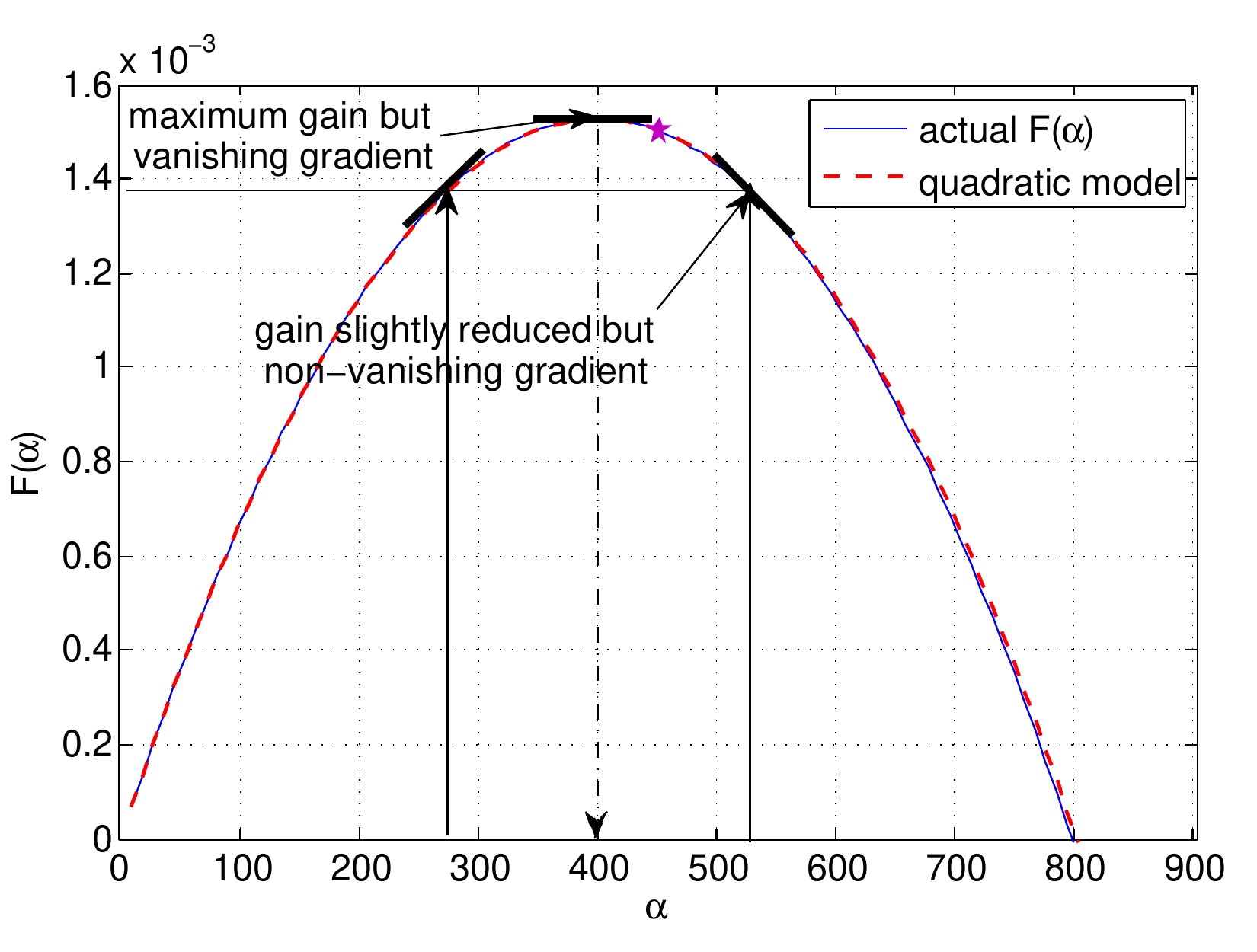} 
\caption{Actual $F(\alpha)$ and quadratic model $\tilde{F}$ for a randomly 
selected time step $t_k$ (problem 1).}\label{fig:quadmodel}
\end{figure}

These considerations suggest that the locally optimal update strategy is
to set the search length $\alpha$ to $\alpha_*$ in each step.  However,
this strategy has one problem: at the local maximum $\alpha_*$ the
derivative of the second order approximation $\tilde{F}(\alpha)$
vanishes, and thus the derivative of $F(\alpha)$ will be approximately
zero as well.  By continuity of the gradient this tends to ensure that
the gradient at the next time step $t_{k+1}$ will be small, especially
for small $\Delta t$.  Therefore, the most greedy local update strategy
may not be the best in the long run as it rapidly kills off the
gradients, thus reducing future gains and decreasing the rate of
convergence in the longer run.  This effect will be most pronounced for
the back-and-forth update, the most greedy update strategy, but it is
still significant for forward-only and even split update.  This suggests
an alternative strategy for choosing $\alpha$.  Instead of choosing
$\alpha=\alpha_*$, we may wish to choose $\alpha$ to be slightly larger
or smaller than $\alpha_*$. Fig.~\ref{fig:quadmodel} shows that this
will not reduce the local gain very much but has a significant impact on
the gradients.  Indeed, Fig.~\ref{fig:conv4} shows that the median
fidelity averaged over 100 runs using the split-update strategy with the
most greedy choice of search length $\alpha=\alpha_*$ is slower in the
long run than two variants of search length adjustment motivated by the
above considerations.  Variant 3 in the figure is based on a deliberate
overshooting strategy, choosing $\alpha=1.25\alpha_*$ in each time step.
Variant 1 is aimed at slowly changing the search length until it falls
within a desirable range $[r_1,r_2]\alpha_*$ around the maximum, and
trying to keep it in this range.  If the current search length
$\alpha<r_1\alpha_*$ then we increase $\alpha$ by a fixed factor
$\alpha_1$, if $\alpha>r_2\alpha_*$ then we decrease $\alpha$ by a
factor $\alpha_2$.  This ensures that whenever $\alpha$ falls outside
the desirable range, it is increased or decreased in each step until it
falls back in the desirable range.  We chose $r_1=\frac{2}{3}$ and
$r_2=\frac{4}{3}$ and $\alpha_1=0.99$, $\alpha_2=1.01$.  The $r$-values
are motivated by the quadratic model, while there is no simple rule for
choosing $\alpha_1$ and $\alpha_2$.  Our choice of factors very close to
$1$ may appear strange and does reduce gains during the first
few-hundred time steps if the initial choice of $\alpha$ is far from the
optimum value, but even such a small factor allows $\alpha$ to increase
20-fold in a single iteration with $K=300$ steps ($1.01^{300}\approx
20$), and gradual changes can facilitate convergence of $\alpha$ to a
near optimal value, especially for gate optimization problems.
Fig.~\ref{fig:conv4} suggests that this strategy is clearly better than
the most greedy one, though slightly worse than the systematic overshoot
strategy when considering only the median fidelity over 100 runs.
Systematic overshoot can result in large changes in the search lengths
especially initially, leading to instabilities and occasional failure,
however.  This is evident in Fig.~\ref{fig:conv-alpha}, which shows the
evolution of the search lengths as a function of the total number of
time steps executed for three different search length strategies.  In
all three cases the search length $\alpha$ quickly approaches a limiting
value.  The limiting values of $\alpha$ for variant 1 and the systematic
overshoot strategy (variant 3) above are very close, and about 30\%
larger than the limiting value for the most greedy strategy (variant 2).
Variant 1, although based on somewhat ad-hoc choices, has the advantage
of avoiding fluctuations and large spikes in the search length,
especially during the first few hundred time steps when the quadratic
model may not be very accurate.

\begin{figure}
\includegraphics[width=\columnwidth]{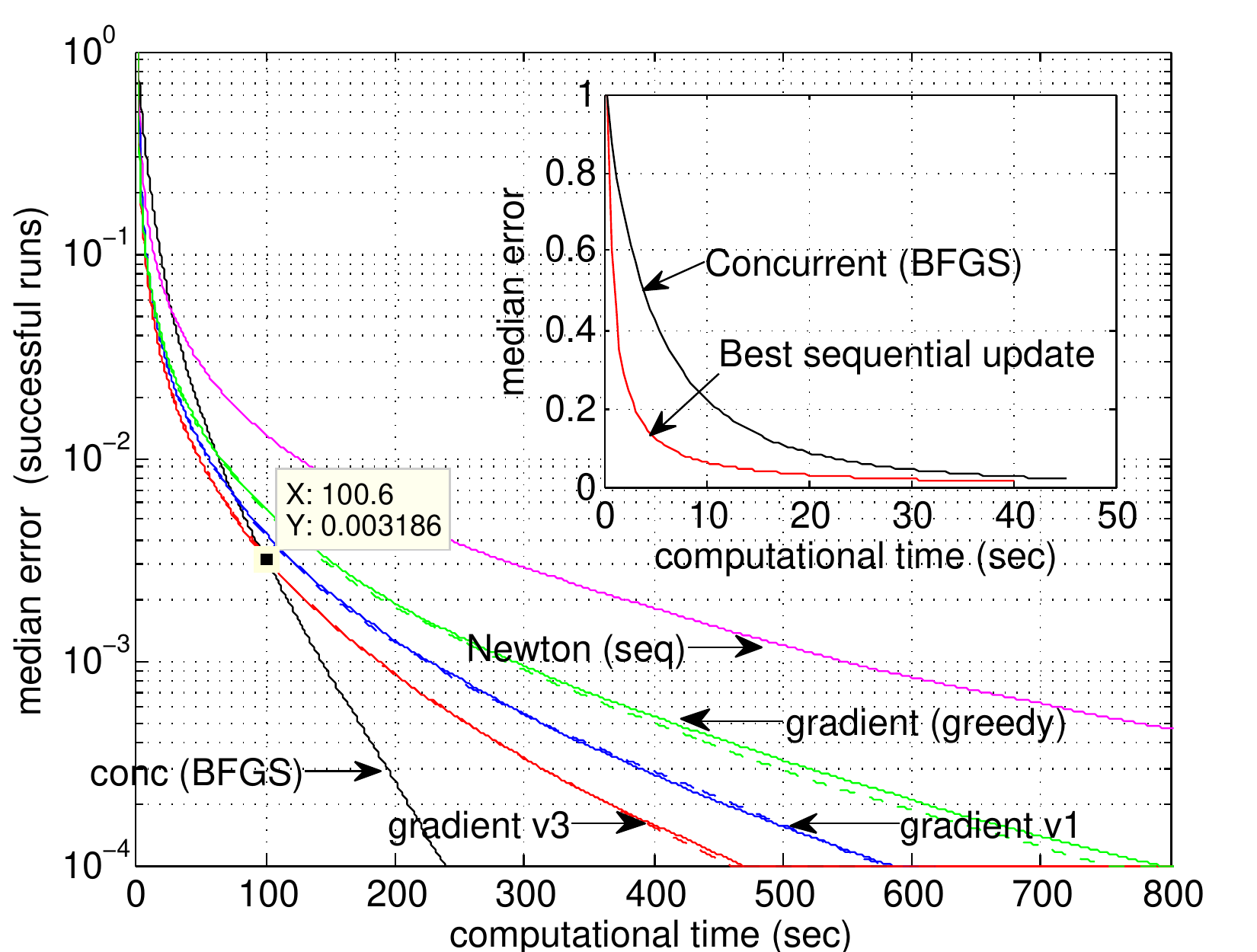} 
\caption{Median fidelity for 100 runs as a function of computational
time for different optimization strategies: sequential gradient,
sequential Newton and concurrent quasi-Newton update using BFGS
algorithm for problem 1 with $\Delta t=0.1$ as a function of the search
length $\alpha$.}\label{fig:conv4}
\end{figure}

Another reason why avoidance of large changes from one time step to the
next is desirable is computational efficiency.  Usually, in optimization
problems one would apply any change in the search length immediately,
but this is computationally expensive as it requires the computation of
a matrix exponential $\exp(-i\Delta t H[\vec{f}(t_k)])$ for the new
field $\vec{f}(t_k)$, and the gain of increasing the step length for any
given time interval is usually small, especially when the search length
is close to its optimum.  To avoid such overheads one may choose not to
apply the search length change at the current step, but only at the next
time step.  This is not too unreasonable as continuity considerations
suggest that the optimal search length should not vary too much from one
time step to the next, and it ensures that the computational overhead of
the dynamic search length adjustment is negligible and the computational
cost per iteration is constant as it would be for a fixed search length.
For sequential optimization over many time slices avoiding the
computational overhead of multiple fidelity re-evaluations at each time
step is usually preferable over the small gain achieved by applying the
search length change to the current time interval, and for unitary gate
optimization problems in particular, the search length usually quickly
approaches an optimal value and varies relatively little after this, as
shown in Fig.~\ref{fig:conv-alpha}.

\begin{figure}
\includegraphics[width=\columnwidth]{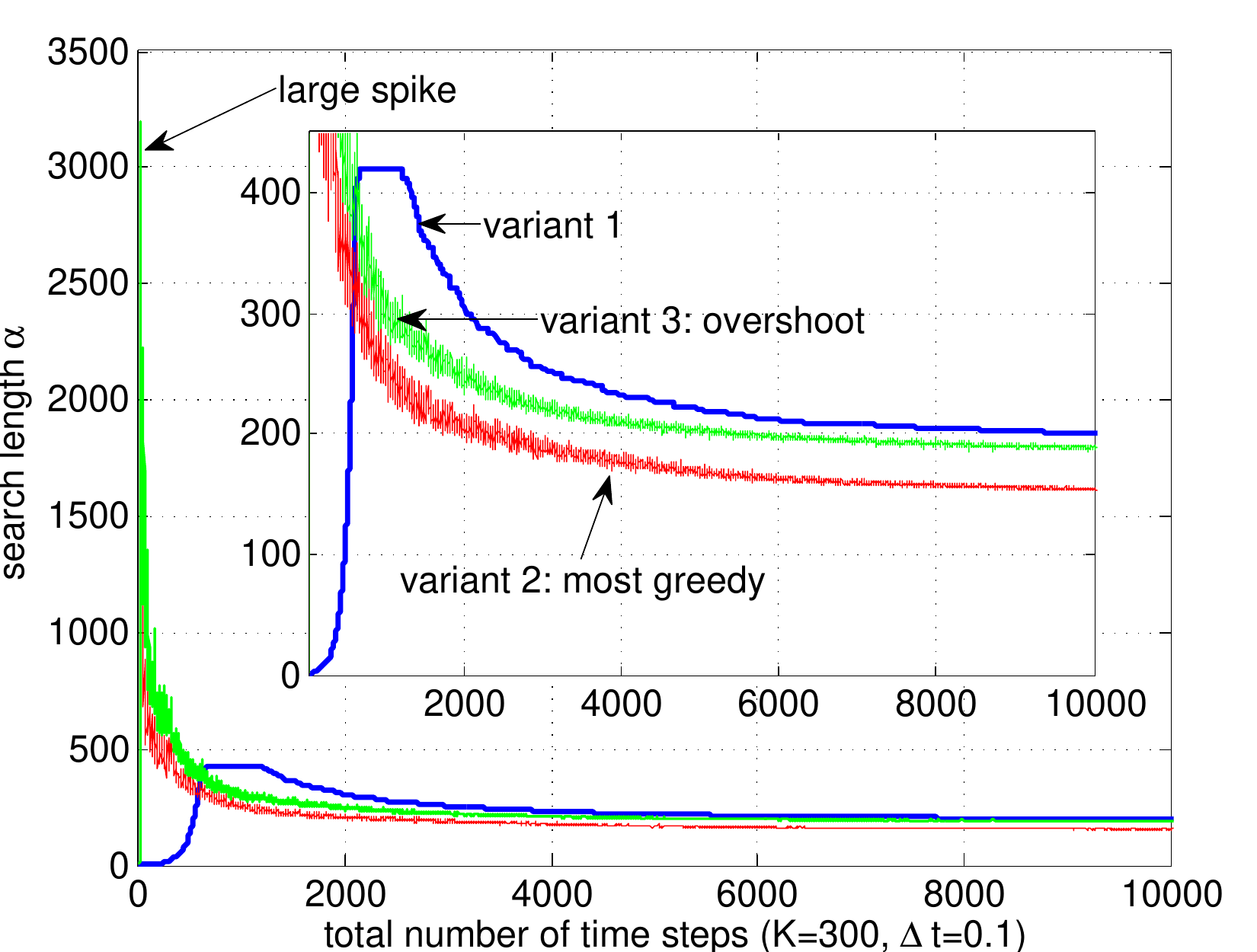}
\caption{Variation of dynamic search length parameter $\alpha$ as the
function of the total number of time steps executed ($K=300$ steps 
per iteration, $\Delta t=0.1$, forward-only update) for different 
search length strategies.}  \label{fig:conv-alpha}
\end{figure}

\subsection{Higher-Order Methods}

In the previous two sections we have considered changing the fields
locally in direction $\Delta \vec{f}_k$ of the gradient
$\nabla_k\FF(\f)$ using a quadratic model for the fidelity change $F$
along that line as a function of the search length $\alpha$. In general
though, there is nothing special about the gradient direction, unless
the local Hessian is a scalar multiple of the identity.  Thus we should
do better if we replace the simple gradient update by a Newton update
step:
\begin{equation}
\label{eq:newtdir}
   \Delta \vec{f}_k = -[\HH^{(k)}]^{-1} \nabla_k \FF(\f),
\end{equation}
using the full matrix of second order partial derivatives $\HH_k$
(Hessian), provided that $\HH^{(k)}$ is strictly negative definite,
given that we are maximizing. If the Hessian $\HH^{(k)}$ is indefinite
or positive definite, the Newton step should never be used since it
would take us to some irrelevant saddle point, or worse the minimum,
rather than the desired maximum, of the quadratic model.  In these
cases, the quadratic model has no unconstrained maximum, and as it is
anyway only accurate in a neighborhood of the original $\vec{f}_k$, a
trust-region method (see Appendix \ref{app:trust}), which restricts
attention to suitably small changes in $\vec{f}_k$, should be used.

For sequential update the Hessian matrix for the $k$th time interval is
$\HH_{mn}^{(k)}=\frac{\d^2 \FF}{\d f_{mk} \d f_{nk}}$, which is an $M
\times M$ matrix, $M$ being the number of controls.  We can easily
derive an analytic expression for it from Eq.~(\ref{solpert}); taking 
$\FF=\frac{1}{N}\Re\Tr(W^\dag U_{\f}(T))$, for example, we obtain
\begin{equation} 
  -\tfrac{1}{N} \Re \Tr \left( W^{\dag} U_{\f}(T,t_k) J_{mn}^{(k)}
		       U_{\f}(t_{k-1},0) \right),
\end{equation}
where $J_{mn}^{(k)}$ is the double integral
\begin{equation}
    J_{m n}^{(k)} = \!\!\!\!
    \underset{t_{k-1}<{\sigma}<{\tau}<t_k}{\int\!\!\int} \!\!\!\!
    U_{\f}(t_k,\tau) H_m U_{\f}(\tau,\sigma)
    H_n U_{\f}(\sigma,t_{k-1}) d\sigma d\tau.
\end{equation}
For piecewise-constant controls $\f(t_k)$ we can again evaluate this
expression exactly.  Indeed, if $iH[\f(t_k)]=V\Lambda V^\dag$ is an
eigendecomposition of $iH[\f(t_k)]$ and $\ket{v_r}$ are the columns of
$V$ and $\lambda_r$ the corresponding eigenvalues, then
\begin{multline*}
   \bra{v_r}J_{mn}^{(k)} \ket{v_s} 
  = e^{-\lambda_r\Delta t} \sum_q  \\
    \underset{0<{\sigma}<{\tau}<\Delta t}{\int\!\!\int}
     e^{\lambda_r \tau} \bra{v_r}H_m \ket{v_q}
   e^{-\lambda_q(\tau-\sigma)} \bra{v_q}H_n\ket{v_s}
     e^{-\lambda_s\sigma} d\sigma d\tau\\
  = \sum_q D_{rqs}\bra{v_r}H_m\ket{v_q} \bra{v_q}H_n\ket{v_s} 
\end{multline*}
with coefficients of
\begin{equation}
  D_{rqs} 
 = \left\{ \begin{array}{ll}
    \frac{\Delta t e^{-\lambda_r \Delta t}}{\omega_{qs}}
     \left[\gamma(\omega_{rs}\Delta t)-\gamma(\omega_{rq}\Delta t)\right] 
   & \lambda_q\neq\lambda_s \\
     \frac{\Delta t e^{-\lambda_s \Delta t} }{\omega_{rs}}
     \left[1-\gamma(\omega_{qr}\Delta t) \right] 
   & \lambda_q=\lambda_s\neq\lambda_r\\
     \Delta t^2 e^{-\lambda_r \Delta t} 
   & \lambda_q=\lambda_s=\lambda_r
   \end{array} \right. 
\end{equation}
Assuming that we have already computed the exact gradient, then the
eigendecomposition of $iH[\f(t_k)]$ and $\bra{v_r}H_m\ket{v_s}$ are
known for all $H_m$, thus we can in principle evaluate the exact Hessian
without too much additional computational effort, but the computational
overhead of evaluating the Hessian and inverting it still needs to be
carefully weighed against the potential gains.

\begin{figure}
\includegraphics[width=\columnwidth]{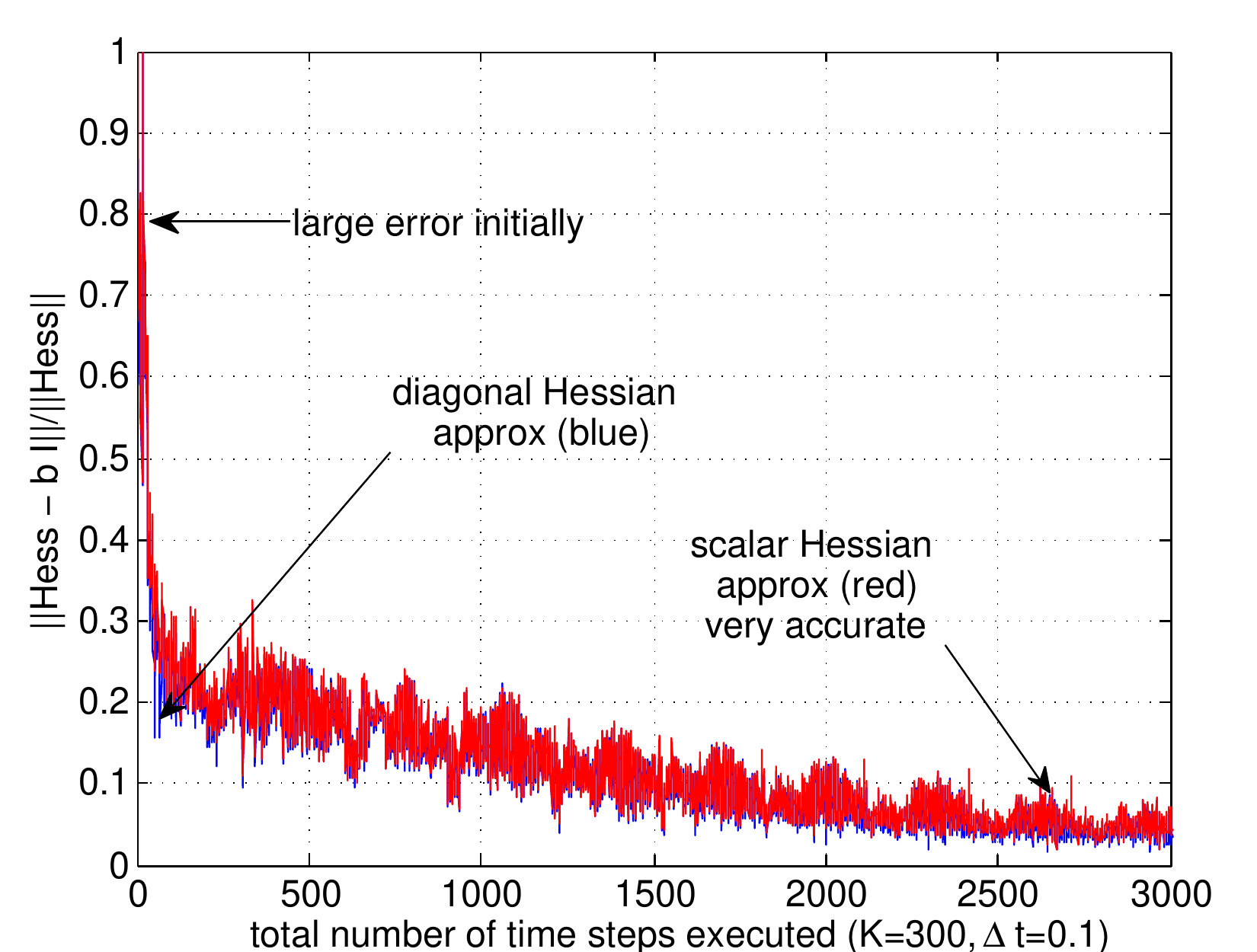}
\caption{Accuracy of scalar Hessian approximation as a function of the
optimization time steps.  The relative error
$\norm{\HH-\beta\ONE}/\norm{\HH}$, where $\beta$ is the average of the
diagaonal elements of $\HH$, is quite large initially but the
off-diagonal elements of $\HH$ quickly die off, and after only 3000 time
steps (10 iterations at $K=300$ time steps per iteration), the relative
error of the scalar approximation is on the order of 5\%.}
\label{fig:Hessian_approx}
\end{figure}

For small $\Delta t$ we can approximate the double integral
\begin{equation}
  \label{eq:hess_approx}
  J_{mn}^{(k)} \approx \tfrac{1}{2}\Delta t^2 U_{\f}(t_k,\tau_k)
                 \{H_m, H_n\} U_{\f}(\tau_k,t_{k-1})
\end{equation}
where $\tau_k=t_k-\tfrac{1}{2}\Delta t=t_{k-1}+\tfrac{1}{2}\Delta t$ and
$\{H_m,H_n\}=H_m H_n + H_n H_m$ is the anti-commutator, which yields
\begin{equation*}
  \HH_{mn}^{(k)} \approx -\tfrac{1}{2N}
  \Re\Tr\left(W^\dag U_{\f}(T,\tau_k) \{H_m, H_n\} U_{\f}(\tau_k,0)\right),
\end{equation*}
and if $W^\dag U_{\f}(T,0)\approx\ONE$ then $W^\dag U_{\f}(T,\tau_k)
\approx U_{\f}(\tau_k,0)^\dag$, thus cycling products under the 
trace, we obtain 
\begin{equation}
  \HH_{mn}^{(k)} \approx -\tfrac{1}{N}\Re\Tr(H_m H_n).
\end{equation}
So if the control Hamiltonians $H_m$ are orthonormal with respect to the
usual Hilbert-Schmidt inner product, i.e., $\Tr(H_m H_n)= \delta_{mn}$
then we expect the Hessian $\HH^{(k)}$ to approach a multiple of the
identity $-\tfrac{1}{N}\ONE$, at least for sufficiently small $\Delta t$
and fidelity sufficiently close to its maximum of $1$.  In this limit
the Newton update reduces to the greedy gradient update with the search
length $\alpha_\ast$ based on a quadratic model of $F(\alpha)$ discussed
in the previous section, since the gradient and Newton directions then
coincide.  Thus, if the local Hessian is close to a scalar matrix then
evaluation and inversion of the Hessian simply adds extra computational
overhead over the greedy gradient update.  Furthermore, considering that
the greedy gradient update is suboptimal globally, the Newton method may
actually achieve worse convergence per iteration in the long run than
the modified gradient update with overshoot, although the Newton method
could be adapted to incorporate a scaling factor $\gamma$ to achieve a
similar deliberate over- or undershoot effect.

These observations are confirmed by Fig.~\ref{fig:conv4}, showing that
the sequential Newton update does not perform well in the long run for
problem 1, which clearly satisfies $\Tr(H_m H_n)=\delta_{mn}$.  Close
inspection shows that the Newton update outperforms the gradient update
for the first few iterations, consistent with
Fig.~\ref{fig:Hessian_approx}, which shows the relative error of the
scalar Hessian approximation $\norm{\HH^{(k)}-\beta
\ONE}/\norm{\HH^{(k)}}$ where $\beta$ here is the average of the
diagonal elements of $\HH^{(k)}$, as the function of the total number of
time steps executed.  As expected, the scalar matrix approximation is a
very poor fit initially but after approximately 3,000 time steps (10
iterations with $K=300$ time steps) the error of the scalar
approximation is approximately 5\%.  Despite the fact that our time
steps $\Delta t=0.1$ are not small ($\max |\omega_{nm}|\Delta t\not\le
1$), and (\ref{eq:hess_approx}) is therefore not a very good
approximation and the Hessian in the limit is not exactly diagonal,
after a few iterations the diagonal elements are still sufficiently
small to be negligible.  This illustrates an important difference
between approximating the gradient and Hessian --- for both of these, it
is the relative error which determines how accurate the step
(\ref{eq:newtdir}) will be, but contrary to the gradient, the Hessian
does not tend to vanish at high fidelities, so that cruder
approximations suffice to usefully estimate it.  

The condition $\Tr(H_m H_n)\propto\delta_{mn}$, implying $\HH^{(k)}\to
\beta\ONE$ for gate optimization problems, is clearly satisfied for
problems involving qubits or spin-$\tfrac{1}{2}$ particles with multiple
independent local controls such as $X^{(n)}$, $Y^{(n)}$ or $Z^{(n)}$,
and can always be made to hold by an application of the Gram-Schmidt
orthonormalisation process to the control matrices.  This argument does
not apply however, to pure-state transfer, tracking or observable
optimization problems, for which the Hessian in the limit can be
arbitrary.  For instance, if we consider the simplest case,
$\FF_1(\f)=\Re\ip{\Phi}{\Psi_{\f}(T)}$, then the local Hessian for the
$k$th time slice, assuming $\Delta t$ not too large, is
\begin{equation}
  \HH^{(k)} \approx 
  \Re\bra{\Phi} U_{\f}(T,\tau_k) \{ H_m,H_n\} \ket{\Psi_{\f}(\tau_k)}.
\end{equation}
This expression will vanish for all $\tau_k$ and regardless of the
initial and target state if $H_m$ and $H_n$ anti-commute, but in general
it need not vanish even at the global maximum, and $\{H_m,H_n\}=0$ for
all $m,n>0$ is a much stronger condition than orthogonality, which is
generally not even satisfied for spin systems with independent local
controls on different qubits because e.g., $\{X^{(m)},X^{(n)}\} = 2
X^{(m)} X^{(n)} \neq 0$.

This suggests a dynamic choice of update rule depending on the type of
problem considered.  For gate optimization problems it may be useful to
do a few trust-region update steps initially, possibly switching to a
simplified Newton update as the Hessian approaches a diagonal matrix,
before finally switching to a gradient update with a search length of
$\frac{1}{N}$, or determined as described in the previous section.  For
other optimization problems such as state transfer or observable
optimization, trust-region update is likely to be advantageous, although
the added computational cost of evaluating the Hessian must be taken
into account.  This cost can be amortized, however, by exploiting the
similarity between $\HH^{(k)}$ at adjacent $k$s for a given field along
a sweep, and the fact that each $\HH^{(k)}$ individually converge as the
field converges.  Fig.~\ref{fig:conv4} also suggests that sequential
update algorithms initially lead to much larger gains in the fidelity
than the most common concurrent update strategy based on a quasi-Newton
algorithm with BFGS Hessian update \cite{de_fouquieres_second_2011}.
However, the initial advantage of the sequential update diminshes as the
optimization proceeds and the concurrent update overtakes the sequential
varieties at high fidelities.  The issue of comparison of sequential and
concurrent update algorithms is explored in detail
in~\cite{machnes_comparing_2010}, which confirms this observation for a
range of gate optimization problems, and suggests the development of
hybrid strategies.

\section{Monotonicity and Rate of Convergence}

If we allow the step size $s$ to be arbitrarily small, or in the extreme
case let it vanish so that we are considering the instantaneous or
continuous method, we have seen that analysis of the sequential scheme
reduces to that of $\frac{\delta\FF(\f)}{\delta f_m(p)}$, because for
all the fidelity functions we are considering, and even more generally,
the fidelity varies in a linear fashion with respect to such local
changes in the fields. Once we move to a fixed step size however, an
accurate approximation to $\FF(\vec{f} + \alpha b_s \ONE_m)$ can require
arbitrarily high order derivatives of $\FF$, and the most we can say in
general regarding the $s \to 0$ regime is that the number of derivatives
required to achieve a given accuracy for $\alpha$ scales as $1/s$.  This
is a weak result, however, which does not reveal much more than we had
already established about different choices of $\alpha$.  To get
stronger results we need to distinguish at least two cases.  The first
is the unitary gate problem of $\FF_7$ or $\FF_{7b}$ and the second the
pure state problem of $\FF_1$, which using the adjoint representation
encompasses $\FF_2$ and therefore also $\FF_5$ and $\FF_{1b}$ as special
cases of this latter.  Contrary to the vanishing $s$ situation the
analysis for finite step size $s$ is quite context sensitive and it is
convenient to describe our results for the single control case before
generalizing to several controls.

\subsection{Quadratic Structure}

Let us consider, at a given value of the single field $f$, how the
fidelity $\FF |_\alpha :=\FF(f + \alpha b_{s,p_k})$ varies as the $f_k$
component of the field, corresponding to the basis function $b_{s,p_k}$,
is changed by an amount $\alpha$. For all fidelity functions,
integrating up the lower bound on their second derivative gives the
quadratic lower bound
\begin{equation}
\FF |_\alpha \ge \FF(f) + \alpha \frac{\d \FF(f)}{\d f_k} + q \alpha^2
\end{equation}
for $q$ equal to some global constant $q_b$. This immediately means any
$\alpha$ between $0$ and $-\frac{1}{q_b} \frac{\d \FF(f)}{\d f_k}$ must
lead to an increase in $\FF$, so that we have already identified a whole
class of schemes monotonically increasing in $\FF$.  In the unitary
cases, what is interesting is that we can also find an upper bound of
this form, with $q=q_a$, and such that $q_a \to q_b$ as $\E,s \to 0$.
The actual fidelity along this local change in the field $\FF|_\alpha$
is therefore increasingly constrained as we approach the asymptotic
instantaneous regime, so that an increase in $\FF$ can only happen for
$\alpha$ between $0$ and $\frac{\d \FF(f)}{\d f_k}
(\norm{H_1}_S^2+O(\sqrt{\E}+s))/s^2$.  Over this $\alpha$ interval of
interest, the difference between the bounds, therefore also the error
incurred by the second order Taylor expansion of $\FF|_\alpha$ about
$0$, is $\E \,O(\sqrt{\E}+s)$.  More generally, in the unitary multiple
control cases, we have that the local Hessian entries
\begin{equation}
\frac{\d^2 \FF(f)}{\d f_{mk} \d f_{nk}}=
\frac{1}{s^2}[\Tr(H_m H_n)+O(\sqrt{\E}+s)] ,
\end{equation}
and the error in the second order expansion of $\FF(\f)$ is still $\E
\,O(\sqrt{\E}+s)$ for any change in $\f_k$ resulting in an increase of
$\FF$.

This offers a classification of those $\alpha$ leading to monotonically
increasing algorithms which coincides with the one of the previous
section for fixed $\alpha$ and arbitrarily small $s$, but is also
applicable to the practically relevant context with both $\alpha$ and
$s$ fixed. It is also interesting to note that as $\FF|_\alpha$ is
linear for arbitrarily small $s$, it is natural to add to it a purely
quadratic cost term $\C$ to ensure $\JJ=\FF+\C$ has a unique global
maximum with respect to changes in $f$ by $b_s$ --- but for $s$ fixed,
such an alteration of the objective is inappropriate as it interferes
with the already quadratic nature of $\FF|_\alpha$.  In practice we also
find that its second order Taylor expansion is a very good approximation
to $\FF|_\alpha$ even when $s$ and especially $\E$ are quite some
distance away from the limiting value of $0$, and it therefore does not
seem worth considering higher order expansions.  Going beyond second
order would also be quite problematic for more than one control as we
would not have a reliable way of finding even local maxima based on this
information.

\subsection{Asymptotic Rate of Convergence}

\begin{figure}
\includegraphics[width=\columnwidth]{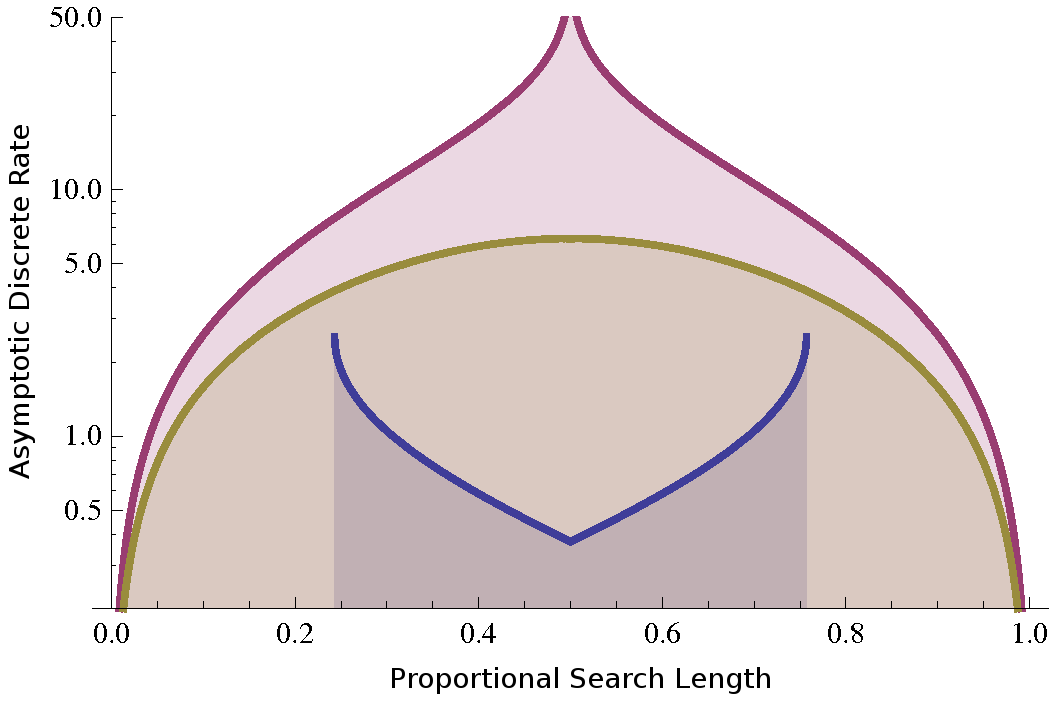}
\caption{Static (red, above) and stateful (blue, below) bounds and
static overestimate (yellow, in between) on the asymptotic rate of
convergence for step size $s=\frac{\norm{H_1}}{4\norm{[H_1,H_0]}}$. This
is in the idealized case of $q_a=q_b$, but otherwise the left half of
the figure would still be as shown with the proportional search length
interpreted as $\alpha q_b$.  The static bound is in fact infinite at
$\frac{1}{2}$ since we cannot strictly rule out attaining a global
maximum in a single step, but in practice it is standard to assume this
can never happen, in which case the static overestimate gives an
impression of what the bound would look like.}  \label{fig:bounds2}
\end{figure}

One guiding principle of numerical optimization is to be greedy: we wish
to update the variables whenever enough information can be obtained to
do so intelligently and aim to induce the largest possible increase in
the objective.  This would point towards back-and-forth sweeping, going
to the maximum in $\f_k$ on each step, as the most efficient strategy as
forward-only sweeping wastes the opportunity to increase the fidelity
when propagating the backwards ODE.  In the single field and unitary
cases of the previous subsection the second order Taylor expansion
generally provides a good estimate for the location of the maximum of
$\FF|_\alpha$ closest to $0$, and using this $\alpha$ to update $\f_k$
gives a canonical optimization algorithm from the set of all possible
strategies leading to a monotonic increase in the fidelity.

Unfortunately, as we have seen, this is not the best strategy as it is
susceptible to rather extreme slowdown in the long run.  The problem is
that the fidelity is effectively quadratic and therefore going for the
nearest maximum of fidelity is equivalent to making the local gradient
as close to zero as possible.  Moreover, as the Hessian converges to a
fixed value in the limit the increase in fidelity achievable in this way
is proportional to the gradient norm squared.  Since the gradient
$\frac{\d \FF(f)}{\d f_k}$ is the continuous function $\frac{\delta
\FF(f)}{\delta f_k(t)}$ integrated against $b_{s,p_k}$, it cannot change
much as we step to the next basis function $b_{s,p_{k+1}}$ centered at
an adjacent point $p_{k+1}$, as is always the case for back-and-forth
sweeping.  Therefore taking the largest gains available on the current
step, as with the canonical greedy algorithm, precludes large gains
being made on subsequent steps.  It is not immediately obvious, however,
what the effect of this will be overall.  To answer this question, we
derive bounds on the rate of convergence, in particular the asymptotic
rate, as we are already in the regime of small infidelity $\E$
throughout.  This stateful bound on the asymptotic rate of convergence
is compared to the static bound in Fig.~\ref{fig:bounds2} for an
illustrative choice of the step size $s$ and every valid search length
$\alpha$.  The combined bound reproduces the bimodal profile of
asymptotic rate vs search length observed in Fig.~\ref{fig:conv2}
(right) --- the rate must vanish towards the ends of the interval of
search lengths making fidelity increase, but it must also be small for
greedy search lengths in the middle of the interval.

In the remaining pure state, density matrix and observable cases, the
situation is less decisive; in particular, the local Hessian need not
converge to any predetermined value as $\E$ and $s$ vanish.  Naturally
the fidelity with respect to local change in some $\f_k$ can only be
accurately approximated up to the nearest local maximum by its second
order Taylor expansion when the Hessian is not too small as otherwise
this quadratic model does not even have a clear maximum.  In the
asymptotic instantaneous regime of small $\E$ and $s$, however, we do
have that when the Hessian is small, the local gradient must be too,
implying that substantial increases in fidelity require large changes be
made to $\f_k$.  Our lack of certainty in the asymptotic local Hessian
values precludes rigorously extending the clean picture from
Fig.~\ref{fig:bounds2} to these cases, but the intuition behind it is
equivalent.  We must choose at each step between maximizing the
immediate fidelity gain and restricting future gains by having a small
gradient, and possibly introducing undesirably large peaks in the field
by making large changes to the field amplitude.

\subsection{General Rate Behavior}

\begin{figure}
\includegraphics[width=\columnwidth]{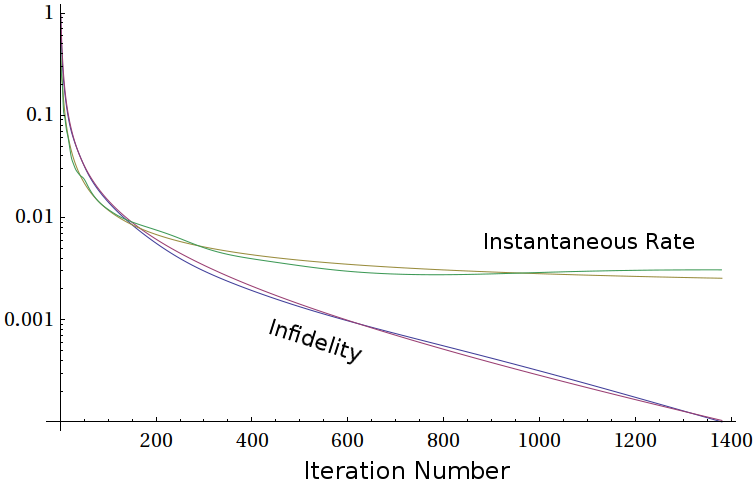} \caption{Combined
plot of the infidelity and instantaneous rate across several sweeps
comparing their actual and modelled values. The rate model is obtained
by a two parameter least-squares fit of (\ref{eq:ratmod}) to the rate
data, which itself comes from forward differences of the infidelity
logarithm. The infidelity model is based on the same $r^\ast$ and $r_0$,
combined with a fitted value for the additional scale factor
involved. The plotted run is the one with median sum of squares error in
the infidelity model across a set of 100 runs.}  \label{fig:model}
\end{figure}

In the asymptotic $\E\to 0$ regime the behavior of discretized
sequential optimization methods is determined by their linearization as
discrete dynamical systems, independently of the specific objective or
fidelity, number of controls, or type of sweep.  This viewpoint
motivates the following ansatz to describe the (instantaneous) rate of
convergence at high fidelities:
\begin{equation}
\label{eq:ratmod}
  r^\ast + \frac{1}{n}-\frac{r_0-r^\ast}{e^{n(r_0-r^\ast)}-1}
\end{equation}
where $n$ is the iteration number, $r^\ast$ the asymptotic rate and
$(r^\ast+r_0)/2$ the initial $n=0$ rate.  Upon integration this provides
a model for $\log(\E)$ up to a scale factor, and hence a model for $\FF$
with only $3$ parameters.  While this model is motivated by analysis of
the asymptotic regime it appears to approximate both the rate and
fidelity remarkably accurately down to the first iteration as
illustrated in Fig.~\ref{fig:model}.

\section{Conclusions}

In the context of quantum control, the method due to Krotov is usually
presented in a continuous formulation and only discretized a posteriori
for numerical use.  However, we have seen how even the fundamental
monotonicity property is jeopardized if the discretization scale $\Delta
t$ is not taken into account when choosing the update rule.  It is thus
natural to view the discretized method, which is a sequentially updating
optimization algorithm, as more fundamental and consider its continuous
analogue as an instantaneous limit of this.  The fidelity with respect
to local changes in the field made in each step of such a sequential
algorithm is essentially a quadratic function, which becomes linear in
the instantaneous limit.  Addition of a penalty in the continuous Krotov
method can therefore be seen as a way of making the objective function
quadratic, from which a canonical update formula emerges. Such a penalty
is unnecessary for the discretized method, however, and in fact
undesirable from both a theoretical and practical point of view.

At this stage a large class of monotonic update schemes with different
sets of search lengths are available and in the literature the choice is
typically left to the user.  However, the search length is an important
parameter that strongly influences the performance characteristics of
the algorithm and guidance in selecting it is thus critical.  The
instrumental notion in doing so is the asymptotic rate of convergence,
which had previously been shown \cite{salomon_convergence_2007} to be
qualitatively at least linear, but for which no quantitative estimates
were available.  At first glance the natural choice of update scheme is
the greedy one, maximizing immediate fidelity gains, but the asymptotic
rate of convergence analysis shows that we cannot extrapolate from the
initial rate of convergence, and proves this to be a poor strategy in
the long run.  Fortunately, the reason behind this failing also emerges
from the analysis and we are able to offer modifications to the greedy
search length or back-and-forth sweeping that enhance performance.  The
analysis explains why forward-only sweeping appears to be the preferred
strategy in the literature, and suggests further improvements such as a
split sweep that logically extend the advantage of forward-only over
back-and-forth sweeping.

In discretizing the continuous method it is also common to use
$\frac{\delta \FF}{\delta f(t_k)}\Delta t$, where $\frac{\d \FF}{\d
f_k}$ is the derivative in the instantaneous limit.  However, this is
only an approximation that is liable to break down and corrupt the
algorithm, especially towards low infidelities $\E$ or larger time steps
$\Delta t$.  We address this issue by outlining the exact method and
various series expansions appropriate for computing these gradients
$\frac{\d \FF}{\d f_k}$ for the most common choice of piecewise constant
controls for each of the fidelity functionals under consideration.  In
selecting an update direction and search length one is naturally lead to
use second derivative (Hessian) information, for which an exact formula
is also available.  In contrast to the situation for concurrent update
optimization algorithms, the analysis also shows however that using the
full local Hessian generally does not result in a performance
improvement compared to a dynamic search length adjustment based on a
quadratic model, at least for unitary gate optimization problems.

Looking forward, the general formulation of sequential methods applied
to a wide range of control optimization problems that arise in the
context of quantum control, as well as simplified convergence results
should enable a streamlined application of these methods to optimizing
other fidelity functions.  The fact that any parametrization of the
fields in terms of localized functions can be used for the sequential
optimization opens the way for more problem-adapted choices than the
standard top-hat function.  Finally, the study of the convergence rate
initiated herein should enable further development of heuristics to make
more efficient choices of search lengths and sweeping patterns, and the
development of hybrid methods that take advantage of the rapid
improvements attainable by sequential update in the initial phase of the
optimization to find suitable candidate fields with moderately high
fidelities before switching to alternative strategies such as concurrent
update to avoid the convergence slowdown of sequential update methods in
the asymptotic regime.  Taking together such improvements in the
efficiency of the algorithms employed for dynamic control optimization
should facilitate the application of the method to a wide range of
coherent control problems and more realistic systems with larger Hilbert
space dimensions or systems involving many qubits.

\acknowledgments 

We acknowledge funding from EPSRC via ARF Grant EP/D07192X/1 and
CASE/CNA/07/47 and Hitachi.

\appendix

\section{Trust-Region Methods}
\label{app:trust}

The trust region sub-problem (TRSP) consists in finding a value of $x$
with $\|x\| \leq r$ such that $\frac{1}{2} x^\mathrm{T} A x +
g^\mathrm{T} x$ attains its minimum possible value over this ball of
radius $r$.  Since $A$ can always be taken symmetric, in a suitable
eigenbasis, it can be expressed as a diagonal matrix with diagonal
elements $\lambda_1 \leq \lambda_2 \ldots \leq \lambda_n$.  We
implicitly work in this basis in what follows, so in particular let
$g_1,\ldots,g_n$ be the components of the vector $g$ expressed in this
basis.  The key to solving this problem is finding the scalar
$\mu^{\ast} \leq \min\{\lambda_1,0\}$ such that $x_{\mu^{\ast}}$
defined by
\begin{equation}
  x_{\mu} = -\left(A-\mu I \right)^{-1} g
\end{equation}
has norm as close to $r$ as possible.  If there exists $\mu_*$ such that
$\norm{x_{\mu^{\ast}}}=r$ then $x_{\mu^{\ast}}$ is the unique solution
of the TRSP corresponding to a minimum at the boundary of the trust
region.  If $\|x_{\mu^{\ast}}\|<r$ and $\mu_\ast=0<\lambda_1$ then the
unique solution of the TRSP is $x_{\mu_\ast}=-A^{-1}g$, corresponding to
a minimum in the interior of the trust region.  If
$\|x_{\mu^{\ast}}\|<r$ but $\mu_\ast\le \lambda_1<0$ then $e_1$ (the
eigenvector corresponding to $\lambda_1$) is a direction of decrease and
we can change the first component of $x_{\mu^{\ast}}$ (in the chosen
eigenbasis) to reach norm $r$, and this point is a (not necessarily
unique) solution to the TRSP corresponding to a minimum at the boundary.

To find $\mu^{\ast}$, one can find the roots of $\frac{1}{r}+
\varphi(\mu)$, where
\begin{equation}
  \varphi(\mu) =
  -\|x_{\mu} \|^{-1} = 
  -\left(\sum_{k=1}^n \frac{g_k^2}{(\lambda_k-\mu)^2}\right)^{-1/2}. 
\end{equation}
As $\varphi$ is a convex increasing function for $\mu \leq
\min\{\lambda_1,0\}$ and its derivative $\varphi'(\mu) = -\varphi
(\mu)^3 \sum_{k=1}^n \frac{g_k^2}{(\lambda_k-\mu)^3}$ is readily
available, an efficient strategy is to use Newton root finding starting
from $\mu_0 = \min \left\{\lambda_1,0\right\}$. If $\varphi \left(
\mu_0 \right) \leq - \frac{1}{r}$, then $\mu_0 = \mu^{\ast}$.  
When $\mu_0=\lambda_1$ and $g_1 \neq 0$ care must be taken to use
$\varphi'(\mu_0)=\frac{1}{|g_1|}$.

\section{Convergence Results for Discretized Problem}
\label{app:conv}

As in the continuous case, convergence of the sequence $\{\FF^{(n)}\}$
(as a sequence of real numbers) is easy to establish provided (i) the
objective functional $\FF(\f)$ is bounded above if we are maximizing, or
below if we are minimizing, and (ii) the update scheme ensures
monotonicity $\FF^{(n+1)}-\FF^{(n)}\ge 0$, as a monotonically increasing
(decreasing) sequence of real numbers that is uniformly bounded above
(below) is convergent.  However, ideally we would like to know that we
actually converge to a field $\f_*$ that is a critical point, and even
better a global maximum (minimum) of the objective $\FF$.  The latter
convergence property is \emph{not} a trivial matter since optimizing
parameters sequentially can lead to iterates spiraling into a closed
path without the gradient of the function converging to
zero~\cite{powell_search_1973}.

Sequential update schemes amount to iteratively updating a set of
coordinates $\tilde{e}_1,\ldots,\tilde{e}_{\tilde{K}}$ in a certain
order.  Specifically, we have $\tilde{K}=K$ and $\tilde{e}_k = e_k$ for
forward-only update , $\tilde{K}=2(K-1)$ and $e_1,\ldots, e_K,\ldots,
e_2$ for back-and-forth sweeping, and $\tilde{K}=K$ and $e_1,\ldots,
e_{\lfloor K/2\rfloor}$,$e_K,\ldots,e_{\lfloor K/2\rfloor+1}$ for split
update.  In general $\tilde{e}_n$, equal to $\tilde{e}_k$ for $k \equiv
n \tmop{mod} \tilde{K}$, is the direction of the change between $x^n$
and $x^{n-1}$.  In what follows, the full derivative is written
$\mathd$, individual coordinates are referenced through subscripting,
and we make use of the shorthand $\partial_n$ for the partial derivative
in coordinate $\tilde{e}_{n+1}$.

\begin{theorem}
If we are sequentially maximizing an analytic function $F:\RR^K
\rightarrow \RR$ with uniformly bounded second derivatives
\begin{equation}
 \left| \frac{\partial^2 F}{\partial x_i \partial x_j} \right|
  \leq \beta, 
\end{equation}
then we can obtain $F(x^{n+1})- F(x^n) \geq \mu |\partial_n
F(x^n)|^2$ for $\mu=\frac{1}{2\beta}$ on each iteration.  If this
inequality holds for some $\mu$ and the lengths of our searches
are forced to satisfy
\begin{equation}
 |\tilde{x}^{n+1}_n -\tilde{x}^n_n| \leq \gamma |\partial_n F(x^n)|
\end{equation}
for some constant $\gamma$ then the sequence of iterates $x^n$ either
diverges to infinity or converges to a stationary point.
\end{theorem}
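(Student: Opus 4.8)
The plan is to treat the two assertions in turn, establishing the per-step gain first and then feeding it into a {\L}ojasiewicz-type convergence argument. For the first inequality I would freeze all coordinates except the one being updated, $\tilde{e}_{n+1}$, and Taylor-expand $F$ along that single direction about $x^n$. Writing the one-dimensional restriction $\phi(t)=F(x^n+t\,\tilde{e}_{n+1})$, the entrywise bound $|\partial^2 F/\partial x_i\partial x_j|\le\beta$ gives $\phi''(t)\ge-\beta$, hence $\phi(t)\ge\phi(0)+t\,\partial_n F(x^n)-\tfrac{\beta}{2}t^2$. Maximising the right-hand side over $t$ at $t^\ast=\partial_n F(x^n)/\beta$ and taking this as the update yields $F(x^{n+1})-F(x^n)\ge|\partial_n F(x^n)|^2/(2\beta)$, i.e.\ $\mu=1/(2\beta)$. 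This step is routine; the only thing to check is that the admissible step $t^\ast$ is compatible with the search-length constraint, which it is with $\gamma=1/\beta$.

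For the convergence dichotomy, monotonicity of $\{F(x^n)\}$ is immediate from $F(x^{n+1})-F(x^n)\ge\mu|\partial_n F(x^n)|^2\ge0$. If $\{x^n\}$ has no bounded subsequence then $\|x^n\|\to\infty$ and we are on the first horn. Otherwise a bounded subsequence together with continuity of $F$ forces $F^\ast:=\lim_n F(x^n)<\infty$, and telescoping the gain inequality gives $\sum_n|\partial_n F(x^n)|^2\le(F^\ast-F(x^0))/\mu<\infty$, whence $\partial_n F(x^n)\to0$ and, via the search-length bound, the steps $\|x^{n+1}-x^n\|\le\gamma|\partial_n F(x^n)|\to0$. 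Because the update directions cycle with period $\tilde{K}$ and $\nabla F$ is Lipschitz (its Hessian entries being bounded by $\beta$), a sliding-window estimate upgrades this to $\nabla F(x^n)\to0$: for each coordinate $i$ pick, within the current period, the step $n_i$ at which $i$ is updated, so that $|\tfrac{\partial F}{\partial x_i}(x^n)|$ is bounded by $|\partial_{n_i}F(x^{n_i})|$ plus a Lipschitz constant times the total movement over the period, and both terms vanish in the limit. Hence every accumulation point is stationary.

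The genuinely hard step is promoting ``every accumulation point is stationary'' to ``$x^n$ converges,'' since $\nabla F\to0$ alone permits drift along near-flat directions (this is the phenomenon behind the cited spiralling examples). Here I would invoke the hypothesis that $F$ is analytic and apply the {\L}ojasiewicz gradient inequality at an accumulation point $x^\ast$ (with $F(x^\ast)=F^\ast$): there are $C>0$, $\theta\in(0,\tfrac12]$ and a ball on which $(F^\ast-F(x))^{1-\theta}\le C\|\nabla F(x)\|$. The obstacle specific to sequential update is that each step controls only one component $\partial_n F$, not the full gradient appearing in the {\L}ojasiewicz bound; I would bridge this by working one cycle at a time. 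Over cycle $j$, combining the Lipschitz transfer above with Cauchy--Schwarz bounds $\|\nabla F(x^{j\tilde{K}})\|$ by $C''\sqrt{\Delta_j}$, where $\Delta_j$ is the gain across the cycle, while the cycle's path length is likewise $O(\sqrt{\Delta_j})$. Feeding these into the concavity estimate for $\Phi_j=(F^\ast-F(x^{j\tilde{K}}))^\theta$ gives $\sqrt{\Delta_j}\le c\,(\Phi_j-\Phi_{j+1})$, so $\sum_j\sqrt{\Delta_j}<\infty$ and therefore the total path length $\sum_n\|x^{n+1}-x^n\|$ is finite. A finite path length makes $\{x^n\}$ Cauchy, so it converges, necessarily to the stationary point $x^\ast$. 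The remaining delicate point, standard in {\L}ojasiewicz arguments, is a trapping induction guaranteeing the iterates never leave the ball on which the inequality holds; I would handle this by starting the estimate from a late iterate of the bounded subsequence that is both close to $x^\ast$ and has $\Phi$ small enough that the accumulated path length cannot push the orbit out of the ball.
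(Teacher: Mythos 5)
Your proposal is correct, and its first half is essentially identical to the paper's: the paper also restricts to the line through $x^n$ in direction $e_i$, uses $-\beta \leq \partial^2 F/\partial x_i^2$ to get the quadratic lower bound, and takes the step $x' = x + \tfrac{1}{\beta}\tfrac{\partial F}{\partial x_i}e_i$, yielding the gain $\tfrac{1}{2\beta}\bigl|\tfrac{\partial F}{\partial x_i}\bigr|^2$ (your remark that this canonical step satisfies the length constraint with $\gamma = 1/\beta$ is a useful connective the paper leaves implicit). Where you genuinely diverge is the convergence dichotomy. The paper does not run the Lojasiewicz argument itself: it aggregates the per-step gains over one sweep into the strong descent condition $F(x^{n+\tilde{K}}) - F(x^n) \geq \frac{\mu}{\tilde{K}\sigma\gamma}\,\norm{\mathd F(x^n)}_1\,\norm{x^{n+\tilde{K}}-x^n}_1$, with $\sigma = 1 + (K-1)\beta\gamma$ coming from exactly your sliding-window Lipschitz transfer ($\norm{\mathd F(x^n)}_1 \leq \sigma \sum_{j=0}^{\tilde{K}-1}|\partial_{n+j}F(x^{n+j})|$), and then invokes the main theorem of Absil, Mahony and Andrews as a black box to conclude divergence or convergence; stationarity of the limit is then obtained, as in your argument, from summability of $|\partial_n F(x^n)|^2$. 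You instead inline the content of that citation: your per-cycle bounds $\norm{\nabla F(x^{j\tilde{K}})} = O(\sqrt{\Delta_j})$ and path length $O(\sqrt{\Delta_j})$, fed through the concavity estimate for $\Phi_j = (F^\ast - F(x^{j\tilde{K}}))^\theta$ to get finite total path length, are precisely what the cited result extracts from the descent condition. The trade-off: the paper's route is shorter and delegates the delicate trapping induction (and degenerate cases such as $\Delta_j = 0$, where the iterates are already stationary over a full cycle) to the reference, while yours is self-contained, makes explicit that analyticity enters only through the Lojasiewicz gradient inequality at an accumulation point, and exposes where cyclic coordinate updates differ from full-gradient descent --- namely in the one-cycle bridge from $\partial_n F$ to $\nabla F$. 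Provided you carry out the trapping induction you sketch, your argument is a complete and valid alternative proof.
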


\begin{proof}
To verify the first claim, suppose we vary the $i^{\tmop{th}}$
coordinate.  $-\beta \leq \tfrac{\partial^2 F}{\partial x_i^2}$
over this line implies that $F$ is strictly increasing between $x$ and
$x'=x+\tfrac{1}{\beta}\tfrac{\partial F}{\partial x_i} e_i$ and the
value of $F$ at $x'$ is at least $F(x)+\frac{1}{2\beta}
\left|\tfrac{\partial F}{\partial x_i}\right|^2$.
  
Next notice that updating $x^n$ along any coordinate can only alter the
derivative $\tfrac{\partial F}{\partial x_i}$ by at most 
$\beta |\tilde{x}^{n+1}_n - \tilde{x}^n_n | \leq 
\beta \gamma |\partial_n F(x^n)|$, so over several iterations we have
\begin{equation*}
\left| \frac{\partial F}{\partial x_i}(x^n) \right|
\leq \left| \frac{\partial F}{\partial x_i}(x^{n+k}) \right| 
         + \sum_{j=0}^{k-1} \beta\gamma |\partial_{n+j} F(x^{n+j})|
\end{equation*}
for any non-negative $k$.  Letting $k_i+1$ be the smallest index within
$1, \ldots, \tilde{K}$ such that $\tilde{e}_{n+k_i+1}=e_i$, therefore
\begin{align*}
 \| \mathd F(x^n) \|_1 
 &\leq \sum_{i=1}^K \left|\partial_{n+k_i} F(x^{n+k_i}) \right| 
         + \sum_{j=0}^{k_i-1} \beta\gamma 
           | \partial_{n+j} F(x^{n+j})| \\ 
 &\leq \sigma \sum_{j=0}^{\tilde{K}-1}
   | \partial_{n+j} F(x^{n+j})|
\end{align*}
with $\sigma=1+(K-1)\beta\gamma$.  Hence
\begin{align*}
         &F(x^{n+\tilde{K}})- F(x^n)\\ 
\geq& \sum_{j=0}^{\tilde{K}-1} \mu 
   |\partial_{n+j} F(x^{n+j})|^2 \\
\geq& \frac{\mu}{\tilde{K}} 
  \left(\sum_{j=0}^{\tilde{K}-1} |\partial_{n+j} F(x^{n+j})| \right)^2 \\
\geq& \frac{\mu}{\tilde{K} \sigma\gamma} 
  \| \mathd F(x^n) \|_1 \sum_{j=0}^{\tilde{K}-1} 
  |\tilde{x}^{n+j+1}_{n+j}-\tilde{x}^{n+j}_{n+j} | \\
\geq& \frac{\mu}{\tilde{K} \sigma \gamma} 
  \| \mathd F(x^n) \|_1 \|x^{n+\tilde{K}}-x^n \|_1
\end{align*}
But this is exactly the primary descent condition of
{\cite{absil_convergence_2006}}, and by the main result in this paper,
the sequence $x^n$ either diverges, i.e., $\|x^n\|\rightarrow \infty$,
or converges to some point $x^{\ast}$.  In any case, if $F(x^n)$ remains
bounded we have
\begin{equation*}
 \sum_{n=0}^{\infty} |\partial_n F(x^n)|^2 
 \leq \frac{1}{\mu} \left(\underset{n\rightarrow\infty}{\lim} 
  F(x^n) - F(x^0) \right) < \infty
\end{equation*}
implying $\partial_n F(x^n) \rightarrow 0$ and by the earlier bound, 
$\|\mathd F(x^n)\| \rightarrow 0$ as $n \rightarrow \infty$; in
particular when $x^{\ast}$ exists, $\mathd F(x^{\ast})=0$.
\end{proof}

The same argument holds for objectives with or without a penalty term. 
However, if we add a standard weighted norm squared penalty term $\C$, 
this norm of the
controls is guaranteed to be uniformly bounded, so there is no way the
controls can fail to converge at all.

\bibliographystyle{prsty}
\bibliography{unenc}
\end{document}